\newtheorem{theorem}{Theorem}
\newtheorem{proposition}[theorem]{Proposition}
\newtheorem{lemma}[theorem]{Lemma}
\definecolor{darkblue}{rgb}{0.15,0.35,0.55}
\definecolor{reddish}{rgb}{.8, 0.2, 0.2}
\long\def\ca#1\cb{} 
\newcommand{\becs}{\begin{cases}}
\newcommand{\bem}{\begin{matrix}}
\newcommand{\dya}[1]{|#1\rangle\langle#1|}
\newcommand{\encs}{\end{cases}}
\newcommand{\enm}{\end{matrix}}
\newcommand{\inp}[1]{\langle#1|#1\rangle }
\newcommand{\ot}{\otimes }
\newcommand{\AC}{{\mathcal A}}
\newcommand{\HC}{{\mathcal H}}
\newcommand{\TC}{{\mathcal T}}
\newcommand{\Cbb}{\mathbb{C}}
\newcommand{\al}{\alpha }
\newcommand{\dl}{\delta }
\newcommand{\ep}{\epsilon}
\newcommand{\cone}{\text{cone}}
\newcommand{\conv}{\text{conv}}
\newcommand{\Herm}{\text{Herm}}
\newcommand{\EXT}{\text{EXT}}
\newcommand{\DPS}{\text{DPS}}
\newcommand{\SEP}{\text{SEP}}
\newcommand{\Sym}{\text{Sym}}
\newcommand{\FOMupdate}{\text{\tt FOMupdate}}
\newcommand{\FOMupdatepst}{\text{\tt FOMupdatepst}}
\def\outl#1{\par{\medskip\noindent\hspace*{0.1cm}\bf
      \mathversion{bold}#1\mathversion{normal}\smallskip} }
   \def\xa{} \def\xb{}  
 \def\outl#1{}\def\xa{}\def\xb{}
 \def\outl#1{\par{\medskip\noindent\hspace*{.5cm}\bf
      \mathversion{bold}#1\mathversion{normal}\smallskip} }
 \long\def\xa#1\xb{} 
\DeclareMathAlphabet{\pazocal}{OMS}{zplm}{m}{n}
\newcommand{\I}{\mathcal{I}}
\newcommand{\HSD}{\text{HSD}}
\newcommand{\PST}{\text{PST}}
\newcommand{\gap}{\text{gap}}
\newcommand{\blue}[1]{{#1}}
\newcommand{\D}{\mathcal D}
\title {First-order and interior-point methods for
entanglement detection}
\author[1]{Javier Pe\~na\thanks{jfp@andrew.cmu.edu}}
\author[2]{Vikesh Siddhu\thanks{vsiddhu@protonmail.com}}
\author[3]{Sridhar Tayur\thanks{stayur@cmu.edu}}
\affil[1]{Tepper School of Business, Carnegie Mellon University}
\affil[2]{IBM Quantum, IBM T.J. Watson Research Center}
\affil[3]{Quantum Technologies Group, Tepper School of Business, Carnegie Mellon University}
\begin{document}
\maketitle

\begin{abstract}
   Quantum entanglement lies at the heart of quantum information science, yet its reliable detection in high-dimensional or noisy systems remains a fundamental computational challenge. Semidefinite programming (SDP) hierarchies, such as the Doherty-Parrilo-Spedalieri (DPS) and Extension (EXT) hierarchies, offer complete methods for entanglement detection, but it is well known that their practical use is limited by exponential growth in problem size if implemented naively. We make three contributions. First, we introduce a new SDP hierarchy, PST, that is sandwiched between EXT and DPS—offering a tighter approximation to the set of separable states than EXT, while incurring significantly lower computational overhead than DPS. Second, we explicitly construct compact, polynomially-scalable descriptions of EXT and PST using partition mappings and operators. These descriptions in turn yield formulations that satisfy desirable properties such as the Slater condition and are well-suited to both first-order methods (FOMs) and interior-point methods (IPMs). Third, we design a suite of entanglement detection algorithms: three FOMs (Frank-Wolfe, projected gradient, and fast projected gradient) based on a least-squares formulation, and a custom primal-dual IPM based on a conic programming formulation. These methods are numerically stable and capable of producing entanglement witnesses or proximity measures, even in cases where states lie near the boundary of separability.  Numerical experiments on benchmark quantum states demonstrate that our algorithms improve the ability to solve deeper levels of the SDP hierarchy. 

\end{abstract}

\newpage

\section{Introduction}
\label{sec:Intro}
Quantum entanglement is a cornerstone of modern quantum technologies that include
quantum computation, communication, cryptography, and
sensing~\cite{HorodeckiEA09}. Detection and characterization of
entanglement are essential for both foundational research and practical
applications in quantum information science (QIS); however, entanglement detection
remains a profound challenge, especially in large, noisy, or high-dimensional
systems~\cite{OtfriedGeza09, SiddhuTayur22}. The crux of this entanglement detection problem
lies in characterizing the convex set of separable~(un-entangled) states, {\tt
Sep}, and in the exponential growth of the Hilbert space.

To address these challenges, the QIS community has turned to convex
optimization methods based on semidefinite programming (SDP) relaxations of the
intractable set of separable states~\cite{DohertyParriloEA04, EisertEA04,
HarrowEA17}. An increasingly tight sequence of SDP relaxations that converges
to {\tt Sep} is called a {\em complete} SDP {\em hierarchy}. 
Explicit examples of these hierarchies come from foundational work of
Doherty-Parrilo-Spedalieri (DPS)~\cite{DohertyParriloEA04} which contains two
complete hierarchies, EXT and its tighter version DPS.  Both are connected to
the broader family of sum-of-squares (SOS) hierarchies \blue{and the method of
moments}~\cite{FangFawzi20}, such as the Lasserre/Parrilo
hierarchy~\cite{Lasserre01, Parrilo2003}.  \blue{Indeed, as detailed
in~\cite{gribling2022bounding}, the DPS hierarchy can also be derived via the
method of moments.}
Complete SDP hierarchies provide a systematic way to detect entanglement at the
cost of increased computational resources in the following way.  At each level
$k$ of the hierarchy an SDP tests a \blue{necessary} condition for a state to be
separable~(increasing $k$ generally increases the SDP's computational
complexity). While a separable state passes the test for all levels $k$, an
entangled state must fail at some level $k$ and then an explicit entanglement
witness can be constructed from the dual SDP.  


Solving SDPs is computationally taxing.
In the particular case of SDPs in hierarchies for entanglement
detection two additional difficulties arise: first, the size of these SDPs may
grow exponentially with level of the hierarchy; and second, off-the-shelf
solvers can take more iterations than necessary to detect entanglement.  To
realize the full potential of SDP hierarchy based approaches to entanglement
detection, it is likely necessary to find both tailored solutions for prior SDP
hierarchies and new hierarchies that admit a tailored solution.

Our work refines the SDP based approaches to entanglement detection, seeking to
balance computational efficiency with detection power. We do so through the
following four main developments (more details in Section~\ref{sec.results}
below): First, we introduce a new \textit{PST hierarchy} that is sandwiched
between the EXT and DPS hierarchies.  \blue{The PST hierarchy is naturally
motivated by computational considerations: While DPS demands a cascade of
increasingly large positive semidefinite constraints, PST adds only a single
constraint of the same size as that required in an efficient description of the
EXT hierarchy.  As numerical experiments demonstrate, at any level $k$,
$\PST_k$ is much tighter than $\EXT_k$ but far more efficient than $\DPS_k$,
thus making it possible to solve $\PST_k$ at significantly higher values
of $k$ if needed. Indeed, there are entangled states not detectable by
$\DPS_2$ but easily detectable via $\PST_3$.  Second, we give an explicit
construction of a new \textit{partition-based operator} $\AC$ to give new
explicit and efficient descriptions of the EXT, PST, and DPS hierarchies.
Although the existence of efficient descriptions for EXT and DPS has been
mentioned in the literature~\cite{DohertyParriloEA04,NavascuMasakiPlenio09,gribling2022bounding}),
no explicit constructions for them appear to have been documented prior
to this work.  The operator $\AC$ and the efficient descriptions of EXT, PST,
and DPS may be of independent interest beyond the entanglement detection
application of this paper.} Third, we propose two formulations - least squares
and conic programming - tailored to First Order Methods (FOMs) and Interior
Point Methods (IPMs) respectively, to harness the convergence properties of the
EXT and PST hierarchies. Fourth, we implement and test our algorithms for
entanglement detection on a collection of instances from the literature.

\subsection{Literature review and motivation}

A quantum state $\rho_{ab}$ on systems $a$ and $b$ with finite dimension $d_a$
and $d_b$, respectively, is separable~($\rho_{ab} \in \text{\tt Sep}$) if there
exist pure states $\{\ket{\chi_i}_a\}, \{ \ket{\phi_i}_b\}$ and a probability
distribution $\{q_i\}$ such that
\begin{equation}
     \rho_{ab} = \sum_i q_i \dya{\chi_i} \ot \dya{\phi_i}.
     \label{eq:cvxDec}
\end{equation}
Otherwise $\rho_{ab}$ is {\em entangled} and there is a separating hyperplane
between $\rho_{ab}$ and {\tt Sep}, called a {\em witness} of entanglement.  
The focus of this work is to detect if $\rho_{ab} \in \text{\tt Sep}$ or not
using tools from Mathematical Physics and Operations Research. 
A variety of mathematical tests to detect entanglement have been, and continue
to be developed~(see~\cite{HorodeckiEA09, OtfriedGeza09} and follow-up works) .
Of particular interest to us are the PPT test and some complete SDP hierarchies
based on the symmetric extension criterion (see Sec.~\ref{subsec.EXT}
and~\ref{subsec.DPS} for more details)~\cite{Peres96, HorodeckiHorodeckiEA96,
DohertyParriloEA04}.
These provide conclusive mathematical tests for entanglement and can be
performed numerically via SDPs.

While solving an SDP for successive levels of an SDP hierarchy gives an
algorithm that can eventually detect entanglement, in practice, there is a
maximum level $k$ up to which we can computationally solve the SDP hierarchy.
If such a computation claims to detect entanglement, then one may demand a
witness of entanglement, else entanglement is not detected upto level $k$ and
it is natural to ask how far the state may be from {\tt Sep} as a function of
$k$. In general, bounds on these distances with $k$ need not be
available~\cite{Beigi_2010}, however if a state passes the EXT test at level
$k$, $\EXT_k$, then its distance from {\tt Sep} scales as
$O(1/k)$~\cite{Ioannou07}.  A similar~(quadratically tighter) bound is known
for the DPS hierarchy~\cite{NavascuMasakiPlenio09}.  Thus increasing $k$ in EXT
and DPS gives a tighter bound on distance from {\tt Sep}.
Unfortunately, increasing the level $k$ in various SDP hierarchies or studying
systems with larger sizes of $d_a$~($d_b$) has thus far been numerically
challenging~\cite{OtfriedGeza09}. 
\blue{To the best of our knowledge, there does not seem to be any documented implementations of the DPS hierarchy for $k\ge 4$.  Some of our numerical experiments illustrate the magnitude of this challenge for $k=3$.} 

The numerical challenge can appear from at least three places. First is
translating an SDP from theory to implementation without increasing its size.
Second is formulating the SDP so it doesn't violate conditions under which it
can be solved efficiently. Third consideration is numerical, general purpose
convex optimization and semi-definite programming software~(PICOS
interface~\cite{Sagnol2022}, MOSEK~\cite{Mosek25}, and other solver) used
widely in quantum information science (QIS)~(for a recent solver tailored to
QIS see~\cite{HeSaundersonFawzi25a}) can become sluggist for even modestly
sized SDPs.


In general, a {\em well-behaved} SDP in standard form minimizing a linear
objective of $n \times n$ positive semidefinite matrices subject to $m$
equality constraints can be solved via interior-point methods~(IPM) to an
accuracy $\dl$ (for both optimality and feasibility) in $O(\sqrt{n}
\log(1/\dl))$ iterations.  Each interior-point iteration involves
$O(mn^3+m^2n^2+m^3)$ floating point operations~\cite{BenTal2021,ToddTT98}. In
massive-scale problems these iterations can become prohibitively costly and
first-order methods~(FOM) provide a viable alternative path. When minimizing a
convex function over {\em simple} domains, such as the spectraplex, a
FOM generates a solution within accuracy $\dl$ in  $O(1/\dl)$ or
$O(1/\dl^2)$ iterations depending on the particular structure of the problem.
Each iteration of a FOM may involve $O(mn^2)$ or $O(mn^2+ n^3)$
floating point operations depending on the particular
algorithm~\cite{BenTal2021}.
%


The worst-case complexity of algorithmic approaches to detect entanglement has
been widely studied with the aid of the weak membership problem: roughly
speaking, given a fixed state $\rho_{ab}$ and $\ep > 0$ decide if the
$\ep$-neighbourhood~(say in trace-distance) of $\rho_{ab}$ contains a separable
state.  This problem has been shown to be NP-hard in several settings and is
related to a variety of other
problems~(see~\cite{Gurvits04,HarrowEA17}).  This problem is
theoretically equivalent to optimizing certain linear functionals over {\tt
Sep}. This equivalence uses the ellipsoid method, which despite its excellent
theoretical properties is not perceived as a viable numerical tool.


It is of interest to estimate upper bounds on the runtime of solving the SDPs
appearing in a complete hierarchy for entanglement detection.  Let $\blue{d_{k}}:=
{d_b + k - 1 \choose k }$ be the dimension of the symmetric subspace over
$k$-copies of the $b$ system. Following~\cite{VandenbergheBoyd1996,
DohertyParriloEA04}, the analysis in~\cite{NavascuMasakiPlenio09} finds the
runtime of solving the SDP arising at the $k^{\text{th}}$ level of the EXT and
DPS hierarchies to be $O(d_a^6 \blue{d_{k}}^6)$ and $O(d_a^6 \blue{d_{k}}^4
\blue{d_{k/2}}^4)$, respectively. 
Using the convergence bounds relating $k$ to distance $\ep$ from {\tt Sep}
discussed previously, the time complexity to solve the weak membership problem
to error $\ep$ using EXT and DPS become $O\big(d_a^6 (2e/\ep)^{6 d_b} \big)$
and $O\big(d_a^6 (e^2/\ep)^{4 d_b} \big)$, respectively. \blue{Our efficient descriptions of $\EXT$ and $\PST$ yield the sharper bound $O\big(d_a^6 (2e/\ep)^{3 d_b} \big)$ for both $\EXT$ and $\PST$.  However, our numerical experiments demonstrate that PST performs vastly better than EXT.}

\subsection{Our results and methods}
\label{sec.results}
%

This work contributes to the goal of efficiently detecting entanglement. 
\begin{itemize}
    \item We propose a new SDP hierarchy, PST, sandwiched between EXT and DPS.
     \item \blue{We construct a partition-based operator $\AC$ to give explicit formulations of EXT, and PST whose size grows
        polynomially with the level of the hierarchy, as opposed to the
        exponential growth of a naive description.  The operator $\AC$ can 
        also be leveraged to give a formulation of DPS that is more efficient than those previously documented in the literature.}

    \item To solve EXT and PST we pursue two approaches:  a least-squares approach, and a
        conic-programming approach. Each approach is shown to be well-scaled and to satisfy the
        Slater condition. 
        %
        The least-squares approach is amenable to first-order methods (FOM) as
        it involves a least-squares objective function over a spectraplex. 
        %
        Our conic-programming approach is naturally amenable to interior-point
        methods (IPM) \blue{and thus has a straightforward implementation in PICOS.}  We also develop a custom
        IPM with computational advantages over generic solvers. 
        %
     
    \item We test our Python implementation of FOM and IPM entanglement
        detection methods on a collection of states from the literature. 
        Our numerical experiments illustrates the nuances that arise for states that are near the boundary of the set of separable states.
        
        %
\end{itemize}

\section{Preliminaries}
We use $\HC$ to denote the $d$-dimensional complex linear vector space
$\Cbb^d$. 
An element $\ket{\psi} \in \HC$ with unit norm,
$\inp{\psi} = 1$, represents a pure quantum state. 
The tensor product~(also called Kronecker product) of two spaces $\HC_a$ and
$\HC_b$ is $\HC_{ab} := \HC_a \ot \HC_b$. 
A bi-partite state $\ket{\psi}_{ab} \in\HC_{ab}$
 is called a product state when there are pure states $\ket{\phi}_a$ and
$\ket{\chi}_b$ such that $\ket{\psi}_{ab} = \ket{\phi}_a \ot \ket{\chi}_b$. Let $P_{ab} \subset
\HC_{ab}$ denote the set of pure product states. Pure states not in $P_{ab}$
are called entangled pure states.

Let $\Herm(\HC)$ represent the space of linear operators on $\HC$ that are
Hermitian. Any $\rho \in \Herm(\HC)$ is called a density operator when $\rho$
is positive semi-definite, $\rho \succeq 0$, and has unit trace, $\Tr(\rho) =
1$.  Let $\phi:=\dya{\phi}$, denote the projector onto pure state $\ket{\phi}$ 
and let $\D(\HC) \subseteq \Herm(\HC)$ denote the set of density operators in $\Herm(\HC)$.

We denote the convex hull of some set $A$ by
$\conv(A):= \{ \sum_i p_i a_i \; | \; a_i \in A, p_i \geq 0, \sum_i p_i = 1\}$ 
and denote the convex conic hull of $A$ by $\cone(A) := \{ \sum_i w_i a_i \; | \; a_i \in A, w_i \geq 0 \}$. Observe 
\begin{equation}
\D(\HC)  = \conv\{ \phi \; | \;  \ket{\phi} \in \HC, \; \inp{\phi} = 1\}. 
\end{equation}
%
In similar vein, the set of separable states on $\HC_{ab}$, is
\begin{equation}
\text{\tt Sep} :=    \conv\{ \phi_{ab}  \; | \; \ket{\phi}_{ab} \in P_{ab}\}.
\end{equation}
By construction, this set of separable states is a convex subset of $\D(\HC)$. Density operators not in this set are called entangled.
It is useful to consider the conic hulls of sets discussed thus far. In particular,
the cone of density operators is precisely the cone of positive semidefinite operators: 
\begin{equation}
\Herm^+(\HC):= \{\phi \in \Herm(\HC) \; | \;  \phi \succeq 0\} =\cone(\D(\HC)) = \{ c \rho \; | \; c \geq 0, \; \rho \in \D(\HC) \}. 
    \label{eq:densityOpCone}
\end{equation}
The convex cone of separable operators is denoted by
\begin{equation}
    \text{SEP} = 
    \cone(\text{\tt Sep}) 
= \{ c \rho \; | \; c \geq 0, \; \rho \in \text{\tt Sep}\}.
\end{equation}
%


The {\em trace} inner product~(also called the  Frobenius inner product) between $\rho,\sigma\in \Herm(\HC)$ takes the form
%
$    \sigma\bullet \rho  = \Tr(\sigma\rho ).    
$
%
Our developments rely on the following {\em polar} construction from convex analysis. 
%
Let $K \subseteq \Herm(\HC)$ be a convex cone. Then its polar is
\begin{equation}
    K^\circ = \{ \sigma \:|\; \sigma\bullet \rho \le 0 \; \forall \rho \in K\}.
    \label{eq:polar}
\end{equation}
It is well known and easy to see that whenever $K\subseteq \Herm(\HC)$ is a closed convex cone the following equivalence holds: 
$ 
\rho \in K \Leftrightarrow \sigma\bullet\rho \le 0 \text{ for all } \sigma \in K^\circ.
$ 

In particular, an operator $\rho \in \D(\HC_{ab})$ is entangled if and only if there exists $W \in \text{SEP}^\circ$ such that $\rho \bullet W > 0$.  When this is the case we say that $W$ is a {\em entanglement witness} for $\rho$.  Thus, the computational task of detecting whether an operator $\rho \in \D(\HC_{ab})$ is entangled can be \blue{equivalently} stated as that of finding an entanglement witness for $\rho$.

\section{SDP hierarchies: EXT, DPS, and PST}
\label{sec.hierarchies}

We next recall the SDP hierarchies EXT and DPS that approximate SEP.
EXT has a more efficient and simpler description whereas DPS converges faster to SEP.
We highlight  explicit  descriptions for EXT \blue{and DPS} that in turn suggest a new SDP hierarchy PST sandwiched between EXT and DPS.  \blue{In  contrast to DPS whose description is substantially more involved than that of EXT, the PST hierarchy has a description that is nearly as efficient as that of EXT.}

\subsection{EXT hierarchy}
\label{subsec.EXT}
For
any integer $k \geq 1$, let 
$    \HC_B :=  \HC_{b}^{\otimes k}$%
and let $\Tr_{b_{2:k}}:\HC_{aB} \rightarrow \HC_{ab}$ be the partial trace over the last $(k-1)$ components of $B$.   A density operator $\rho_{aB} \in \D(\HC_{aB})$
is an {\em extension} of $\rho_{ab}$ if $\rho_{aB} = \Tr_{b_{2:k}}(\rho_{aB}).$
Let  $\Sym(\HC_{B})\subseteq\HC_B$ denote the following subspace of {\em symmetric vectors}:
\[
\Sym(\HC_{B}):=\{\ket{\psi} \in \HC_{B} \;|\;\pi \cdot \ket{\psi} = \ket{\psi} \text{ for all } \pi\in \mathfrak{S}_k\}
\]
where $\mathfrak{S}_k$ is the symmetric group of $k$ elements and each $\pi \in \mathfrak{S}_k$ acts on $\HC_B$ by permuting the $k$ components:
$
\pi \cdot \ket{\psi_1} \ot \cdots \ot \ket{\psi_k} = \ket{\psi_{\pi(1)}} \ot \cdots \ot \ket{\psi_{\pi(k)}}.
$

Let $\Pi:\HC_B \rightarrow \HC_{B}$ denote the orthogonal projection onto $\Sym(\HC_{B})$. 
An extension $\rho_{aB}$ of $\rho_{ab}$ is called symmetric if 
$
    \rho_{aB} = (I_a\ot \Pi) \rho_{aB} (I_a \ot \Pi).
$
A quantum state $\rho_{ab}$ is separable if and only if it has a symmetric
extension for all $k \geq 1$~\cite{Werner89, CavesFuchsSchack02}.  Consider the
hierarchy of cones
\[
\EXT_k := \{\Tr_{b_{2:k}}(\rho_{aB}) \;|\; \rho_{aB} = (I_a\ot \Pi) \rho_{aB} (I_a \ot \Pi), \rho_{aB}\succeq 0\}.
\]
The $\EXT$ hierarchy of cones satisfies the following two properties:
\[
\EXT_{k+1} \subseteq \EXT_{k} \text{ for } k=1,2,\dots
\; \text{ and } \;
    \bigcap_{k=1}^\infty \EXT_{k} =  \text{SEP}.
\]
In particular, $\rho \not \in \SEP$ if $\rho \not \in \EXT_k$.  The latter in turn holds if and only if
there exists $W\in \EXT_k^\circ$ such that $\rho\bullet W >0$. 
We will rely on the following alternate description of $\EXT_k$.

\begin{proposition}\label{prop.ext}
Suppose $P:\HC\rightarrow \HC_B$ is a linear operator such that
$P(\HC) = \Sym(\HC_B)$. Then
\begin{equation}\label{alt.ext}
\EXT_k =
\{\Tr_{b_{2:k}}((I_a\ot P) X (I_a\ot P^\dagger)) \;|\; X\in \Herm^+(\HC_a \ot \HC)\}.
\end{equation}
\end{proposition}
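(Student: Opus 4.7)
The plan is to show both inclusions. For the inclusion $\supseteq$, I would start with an arbitrary $X \succeq 0$ in $\Herm(\HC_a \ot \HC)$ and verify that $\rho_{aB} := (I_a\ot P) X (I_a\ot P^\dagger)$ lies in $\EXT_k$. Positivity is immediate since a congruence transformation preserves positive semidefiniteness. The symmetry condition $\rho_{aB} = (I_a\ot \Pi)\rho_{aB}(I_a\ot \Pi)$ follows from the key identity $\Pi P = P$, which holds because $P(\HC) = \Sym(\HC_B)$ is exactly the range of $\Pi$, so $\Pi$ acts as the identity on vectors in the image of $P$. Taking adjoints gives $P^\dagger \Pi = P^\dagger$, and combining these yields the desired invariance.

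For the inclusion $\subseteq$, given $\rho_{aB}\succeq 0$ with $\rho_{aB} = (I_a\ot \Pi)\rho_{aB}(I_a\ot \Pi)$, I need to construct $X\succeq 0$ such that $\rho_{aB} = (I_a\ot P) X (I_a\ot P^\dagger)$. The natural tool is the Moore--Penrose pseudoinverse $P^+: \HC_B \to \HC$, which satisfies $P P^+ = \Pi$ because $P$ is surjective onto $\Sym(\HC_B)$ and $\Pi$ is the orthogonal projection onto that same subspace. I would then define
\[
X := (I_a \ot P^+)\,\rho_{aB}\,(I_a \ot (P^+)^\dagger),
\]
which is positive semidefinite as another congruence of a PSD operator. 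A direct computation using $P P^+ = \Pi$ and the hypothesized symmetry condition gives
\[
(I_a\ot P)X(I_a\ot P^\dagger) = (I_a\ot \Pi)\rho_{aB}(I_a\ot \Pi) = \rho_{aB},
\]
completing the reverse inclusion.

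There is no serious obstacle here; the only subtlety is choosing the right linear-algebraic gadget to invert $P$ on its range. Any right inverse $Q$ of $P$ composed with $\Pi$ would work equally well, but $P^+$ is the cleanest choice because the identity $PP^+ = \Pi$ is a standard property of the pseudoinverse of a surjection onto a subspace whose orthogonal projector is $\Pi$. The argument is symmetric and short; the real content of the proposition is conceptual, namely that one can replace the implicit description of $\EXT_k$ via the symmetry constraint \eqref{symm} by an explicit parameterization through any linear operator $P$ onto $\Sym(\HC_B)$. This sets the stage for the polynomially-sized descriptions developed via the partition operators in Section~\ref{sec.partition}.
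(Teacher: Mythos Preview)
Your proof is correct and the $\supseteq$ direction matches the paper's argument exactly (both rely on $\Pi P = P$). For the $\subseteq$ direction, the paper takes a slightly different route: it factors Lemma~\ref{lemma.symm} out as a standalone statement about all of $L(\HC_B)$, arguing column by column that any $Y$ with $Y=\Pi Y\Pi$ has each column in $\Sym(\HC_B)=P(\HC)$, hence $Y=PZ$ for some $Z$, and then repeats the trick on the other side to get $Y=PXP^\dagger$. Your use of the Moore--Penrose pseudoinverse is a cleaner alternative: the single formula $X=(I_a\ot P^+)\rho_{aB}(I_a\ot(P^+)^\dagger)$ together with $PP^+=\Pi$ does the job in one line, and it has the further advantage of immediately yielding $X\succeq 0$ by congruence, whereas the paper's column argument produces some $X$ whose positive semidefiniteness is left implicit in the ``In particular'' clause. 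Both approaches are elementary; yours is more explicit and self-contained for the PSD case actually needed in the proposition.
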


\subsection{DPS hierarchy}
\label{subsec.DPS}
Another criterion for checking if a state $\rho_{ab}$ is separable is the
positive under partial transpose~(PPT) criterion.  Let 
$\TC_b:\Herm(\HC_{ab}) \rightarrow \Herm(\HC_{ab})$ denote the partial transpose with respect to $\HC_b$, that is, 
$\TC_b(\rho_a\ot\rho_b) = \rho_a \ot \rho_b^T$.
 For any separable state $\rho_{ab}$, the operator $\TC_{b}
(\rho_{ab})$ is positive semi-definite; thus, one says any separable
state $\rho_{ab}$ is PPT.  It thus follows that $\rho_{ab}$ is entangled if
$\TC_{b} (\rho_{ab})$ is not PPT~\cite{Peres96, HorodeckiHorodeckiEA96}. For
$d_a =2$ and $d_b = 2,3$, this PPT criterion is both necessary and
sufficient~\cite{HorodeckiHorodeckiEA96, Woronowicz76}. In general, this PPT
criterion is necessary but not sufficient~\cite{Horodecki97}.  

The Doherty-Parrilo-Spedalieri (DPS) criterion~\cite{DohertyParriloEA04} is a combination of the above symmetric extension criterion and PPT criterion.  In this criterion, for any $k \geq 1$,
the symmetric extension $\rho_{aB}$ of $\rho_{ab}$ must also be PPT, where the
partial transpose is taken with respect to each of the $k$ spaces $\HC_{b_1}$,
$\HC_{b_1 b_2}$, $\dots$, $\HC_{b_1 b_2 \dots b_k}$; that is,
\[
\TC_{b_1, b_2, \dots, b_j} (\rho_{aB}) \succeq 0 \text{ for } j=1,\dots,k.
\]
For ease of notation, write the previous equation as
$\TC_{b_{1:j}} (\rho_{aB}) \succeq 0 \text{ for } j=1,\dots,k.
$
This criterion can be stated in terms of the following hierarchy of cones $\DPS_k, \; k=1,2,\dots$ 
\[
\{\Tr_{b_{2:k}}(\rho_{aB}) \;|\; \rho_{aB} = (I_a\ot \Pi) \rho_{aB} (I_a \ot \Pi), \rho_{aB} \succeq 0, \; \TC_{b_{1:j}} (\rho_{aB}) \succeq 0 \text{ for } j=1,\dots,k\}.
\]
We will rely on the following analogue of Proposition~\ref{prop.ext} that gives an alternate description of $\DPS_k$.
\begin{proposition}\label{prop.dps}
Suppose $P:\HC\rightarrow \HC_B$ is a linear operator such that
$P(\HC) = \Sym(\HC_B)$. Then
\begin{equation}\label{alt.dps}
\begin{aligned}
&\DPS_k = \\
&\{\Tr_{b_{2:k}}((I_a\ot P) X (I_a\ot P^\dagger)) \;|\; X \in \Herm^+(\HC_a \ot \HC),  \TC_{b_{1:j}} ((I_a\ot P) X (I_a\ot P^\dagger)) \succeq 0 \text{ for } j=1,\dots,k\}.
\end{aligned}
\end{equation}
\end{proposition}
By construction, for each $k=1,2,\dots$ we have $\SEP\subseteq \DPS_k \subseteq \EXT_k$.  
Therefore the hierarchy of cones $\DPS_k$ also satisfies
\[
\DPS_{k+1} \subseteq \DPS_{k} \text{ for } k=1,2,\dots
\;
\text{ and }
\;
    \bigcap_{k=1}^\infty \DPS_{k}  = \text{SEP}.
\]

\subsection{PST hierarchy}

We next introduce some new SDP hierarchy, namely the $\PST$ hierarchy that is sandwiched between $\EXT$ and $\DPS$.
To that end, consider the following relaxed version of $\DPS_k$ as described in~\eqref{alt.dps}:
\[
\{\Tr_{b_{2:k}}((I_a\ot P) X (I_a\ot P^\dagger)) \;|\; X \in \Herm^+(\HC_a \ot \HC), \TC_{b_{1:k}} ((I_a\ot P) X (I_a\ot P^\dagger)) \succeq 0\}.
\]
We shall denote this set $\PST_k$.
\blue{We have the following analogue of Proposition~\ref{prop.ext} and Proposition~\ref{prop.dps} that gives an alternative description of $\PST_k$ and motivates its choice among the many possible cones sandwiched between $\EXT_k$ and $\DPS_k$.
\begin{proposition}\label{prop.pst}
Suppose $P:\HC\rightarrow \HC_B$ is a linear operator such that
$P(\HC) = \Sym(\HC_B)$. Then
\begin{equation}\label{alt.pst}
\PST_k = \{\Tr_{b_{2:k}}((I_a\ot P) X (I_a\ot P^\dagger)) \;|\;  X \in \Herm^+(\HC_a \ot \HC), \; \TC(X)\succeq 0 \},
\end{equation} 
where $\TC(X)$ is the  partial transpose with respect to $\HC$, that is, for $M\in L(\HC_a), Y \in L(\HC)$
\[
\TC(M\otimes Y) = M\otimes Y^T.
\]
\end{proposition}
}

By construction, we have $\DPS_k\subseteq \PST_k\subseteq\EXT_k$.  The PST hierarchy is thus stronger than EXT but with a lower overhead than DPS. \blue{The partition mappings and operators described in Section~\ref{sec.partition} below show that the main computational overhead in the descriptions of EXT, PST, and DPS shifts to the positive semidefinite constraints needed in each case. The description~\eqref{alt.ext} of EXT requires only one positive semidefinite matrix $X \in \Herm^+(\HC_a \ot \HC)$  and the description~\eqref{alt.pst} of PST requires only an additional positive semidefinite matrix of the same size as $X$, namely $\TC(X) \in \Herm^+(\HC_a \ot \HC)$.  By contrast, in addition to $X \in \Herm^+(\HC_a \ot \HC)$, the description~\eqref{alt.dps} of DPS requires $k$ additional positive semidefinite matrices, most of them larger than $X$, for the constraints $\TC_{b_{1:k}} ((I_a\ot P) X (I_a\ot P^\dagger)) \succeq 0, \; j=1,\dots,k.$}

\subsection{Entanglement detection via SDP hierarchies}

Let $\HSD$ denote a generic hierarchy of decreasing SDP-representable cones $\HSD_k$ such that 
\[
\bigcap_{k=1}^\infty \HSD_{k} =  \text{SEP}.
\]
For example, $\HSD$ could be one of the $\EXT, \DPS,$ or $\PST$ hierarchies.

An {\em ideal entanglement detection algorithm} with unlimited time and unlimited computational resources to solve semidefinite programs could proceed as follows: determine whether $\rho \in \HSD_k$ for $k=2,3,\dots$.  If $\rho\not \in \HSD_k$ for some finite $k$ then find $W\in \HSD_k^\circ \subseteq \SEP^\circ$ such that $W\bullet \rho > 0$ otherwise conclude that $\rho$ is separable.

However, in reality there is only limited computational time, and solving a semidefinite program requires non-trivial numerical procedures subject to limitations on computation time, memory usage, and precision. The next sections describe implementable algorithms that achieve the following 
{\em actual entanglement detection algorithm:} for some finite $k$ either  find $W\in \HSD_k^\circ \subseteq \SEP^\circ$ such that $W\bullet \rho > 0$ or conclude that $\rho$ is near some $\tilde \rho \in\HSD_k$.  In the former case we detect entanglement and $W$ is an entanglement witness whereas in the latter case we do not detect entanglement but instead conclude that $\rho$ is near $\SEP$.

\section{Partition mappings and operators $\AC$ and $\AC^{\dagger}$}\label{sec.partition}

The descriptions~\eqref{alt.ext},~\eqref{alt.dps}, and~\eqref{alt.pst} of $\EXT_k, \DPS_k,$ and  $\PST_k$ are stated in terms \blue{of the} matrices \blue{$(I_a\ot P) X (I_a\ot P^\dagger)$} that grow exponentially on $k$.  
This section describes alternate descriptions of $\EXT_k$ and $\PST_k$ that grow only polynomially on $k$ via some suitable operator $\AC$ and its adjoint
$\AC^{\dagger}$.  
This section describes the gist of the construction of $\AC$ and $\AC^\dagger$.  We give additional details of this construction in the Appendix.

To ease notation, throughout this section we will often write $d$ as shorthand for $d_b = \dim(\HC_b)$.  Thus $\dim(\HC_B) = d^k$ and $\dim(\Sym(\HC_B)) = d_k:= {d+k-1 \choose k}$.  The main building block of $\AC$ is a suitable {\em partition mapping} $P:\HC\rightarrow \HC_B$ that satisfies $P(\HC) = \Sym(\HC_B)$ and $P^\dagger P = I_{\HC}$ for some space $\HC$ with $\dim(\HC) = d_k$.
Let $\AC:\Herm(\HC_a\ot \HC)\rightarrow \Herm(\HC_{ab})$ denote the operator 
\[
X\mapsto \Tr_{b_{2:k}}((I_a\ot P) X (I_a\ot P^\dagger)).
\]
It thus follows that the adjoint operator $\AC^{\dagger}: \Herm(\HC_{ab})\rightarrow\Herm(\HC_a\ot \HC)$ is
\[
W \mapsto (I_a\ot P^\dagger) \left(W \ot I_{b_{2:k}}\right)(I_a\ot P).
\]
The following proposition readily follows from the construction of $\AC$ and the fact that $P^\dagger P = I_{\HC}$.
\begin{proposition}\label{prop.prop.A} The operator $\AC:\Herm(\HC_a\ot \HC)\rightarrow \Herm(\HC_{ab})$ satisfies the following two properties.  First, 
 $\AC(X) \in \Herm^+(\HC_{ab})$ for all $X\in \Herm^+(\HC_a\ot \HC)$.  Second, $\AC^{\dagger}(I_{ab}) = I_{a}\ot I_{\HC}.$
\end{proposition}
To ease notation we will write $\I$ as shorthand for $I_{a}\ot I_{\HC}$ throughout the sequel and thus write the second property above as $\AC^\dagger(I_{ab}) = \I$.  We will rely on this property for our main developments.

A naive implementation of the operators $\AC$ and $\AC^\dagger$ would involve computing matrices of dimension $d^k \times d^k$ in some intermediate step.  However, that exponential blowup can be entirely circumvented.  The operator $\AC$, when reshaped as a matrix, is quite sparse. 
Indeed, the number of nonzero entries in each row of $\AC$ reshaped as matrix is $d_{k-1}$.  Thus the total number of nonzero entries in $\AC$ is  $d_a^2d_b^2d_{k-1}$ and they can be explicitly described as shown in~\eqref{eq.Mentries} in the Appendix. \blue{Although the operator $\AC$ depends on the exponentially large partition mapping $P$, our construction of $\AC$ avoids constructing $P$ entirely.}

Equation~\eqref{alt.ext} in 
Proposition~\ref{prop.ext} implies that the cones $\EXT_k$ and $\EXT_k^\circ$ can be described as follows:
\[
\EXT_k =
\{\AC(X) \;|\; X \succeq 0\} \quad \text{ and } \quad \EXT_k^\circ  
=\left\{W \; | \; \AC^{\dagger}(W)  \preceq 0 \right\}.
\]


Likewise, Equation~\eqref{alt.pst} implies that the cones $\PST_k$ and $\PST_k^\circ$ can be described as follows:
\[
\PST_k =
\{\AC(X) \;|\; X \succeq 0, \; \TC(X)\succeq 0 \}, \;  \PST_k^\circ  
=\left\{W \; | \; \AC^{\dagger}(W) + \TC^\dagger(Z)  \preceq 0 \text{ for some } Z\succeq 0\right\}.
\]

Suppose $\rho\in\D(\HC_{ab})$.  Observe that $\rho\in \EXT_k$ if and only if the following equivalent systems of constraints are  feasible
\begin{equation}\label{EXT.primal}
\begin{aligned}
      \AC(X)&= \rho  \\
     X &\succeq 0.
    \end{aligned}
    \qquad \Leftrightarrow \qquad
    \begin{aligned}
      \AC(X)&= \rho \\
      \I\bullet X &= 1\\
     X &\succeq 0
    \end{aligned}
    \qquad \Leftrightarrow \qquad
    \begin{aligned}
      \AC(X)&= \rho \\
     X &\in \D(\HC_a\ot\HC).
    \end{aligned}
\end{equation}
The equivalence in~\eqref{EXT.primal} holds because $\AC^{\dagger}(I_{ab}) = \I$ and $I_{ab}\bullet \rho = 1$.
Similarly, $\rho\in \text{PST}_k$ if and only if the following equivalent systems of constraints are  feasible
\begin{equation}\label{PST.primal}
\begin{aligned}
      \AC(X)&= \rho  \\
      X & \succeq 0 \\
     \TC(X) &\succeq 0
    \end{aligned}
    \qquad \Leftrightarrow \qquad
    \begin{aligned}
      \AC(X)&= \rho \\
     \TC(X) - Y &= 0 \\
     X,Y &\in \D(\HC_a\ot\HC).
    \end{aligned}
\end{equation}
We conclude this section with the following scaling properties of the operator $\AC$.  

\begin{proposition}\label{prop.norm}
The Euclidean norm of the operators 
 $\AC^\dagger\AC :\blue{\Herm}(\HC_{a}\ot \HC)\rightarrow \blue{\Herm}(\HC_{a}\ot \HC)$ and  
$\AC \AC^\dagger:\blue{\Herm}(\HC_{ab})\rightarrow \blue{\Herm}(\HC_{ab})$
is \[
      \|\AC^\dagger \AC\| = \|\AC \AC^\dagger\| = \frac{d_{k}}{d},
      \]
 and the Euclidean norm of the operator $(\AC \AC^\dagger)^{-1}:\blue{\Herm}(\HC_{ab})\rightarrow \blue{\Herm}(\HC_{ab})$ can be bounded as follows
    \[
      \|(\AC \AC^\dagger)^{-1} \| \le \frac{d(d+1)}{d_{k}}.
    \]
\end{proposition}


\section{Algorithms to detect entanglement based on $\EXT$}
\label{sec.EXT}

As we detailed in Section~\ref{sec.partition}, membership in $\EXT_k$ can be formulated as the semidefinite system of constraints~\eqref{EXT.primal}.  This section describes several implementable algorithms that achieve the following version of the {\em Actual Entanglement Detection Algorithm} described in Section~\ref{sec.hierarchies}.  For some finite $k$:
\begin{itemize}
    \item Solve an alternative of~\eqref{EXT.primal} to obtain $W\in \EXT_k^\circ \subseteq \SEP^\circ$ with $W\bullet \rho > 0$.   In this case we detect entanglement and $W$ is an entanglement witness.
    \item Solve~\eqref{EXT.primal} approximately and conclude that $\rho$ is near a point $\tilde \rho \in\EXT_k$ and hence may or may not be entangled.    
\end{itemize}

\subsection{Formulation amenable to first-order methods (FOM)}
\label{sec.fom.ext}
The third system of constraints in~\eqref{EXT.primal} suggests the following least-squares approach that is amenable to first-order methods (FOM). 
Observe that $\D(\HC_a\ot \HC)=\{X\in \Herm(\HC_a\ot \HC) \;|\; X\succeq 0, \I\bullet X=1\}$ is the ``spectraplex'' in $\Herm(\HC_a\ot \HC)$.  Consider the least-squares problem
\begin{equation}\label{eq.ls.primal}
    \begin{aligned}
      \min_X \; & \frac{1}{2}\|\AC(X)- \rho\|^2 \\
      & X \in \D(\HC_a\ot \HC)
    \end{aligned} \quad \Leftrightarrow        \quad 
    \begin{aligned}
      \min_{x,X} \, \max_{u} \; & \frac{1}{2}\|x\|^2 +  (\AC(X)- \rho - x)\bullet u \\
      & X \in \D(\HC_a\ot \HC)
    \end{aligned}
\end{equation}
and its Fenchel dual
\begin{equation}\label{eq.ls.dual}
    \begin{aligned}
    \max_{u} \, \min_{x,X}  \; & \frac{1}{2}\|x\|^2 +  (\AC(X)- \rho - x)\bullet u \\
      & X \in \D(\HC_a\ot \HC)
    \end{aligned}
\quad \Leftrightarrow     \quad  
    \begin{aligned}
      \max_u & -\frac{1}{2}\|u\|^2 - \rho\bullet u - \lambda_{\max}(-\AC^{\dagger}(u)).
    \end{aligned}
\end{equation}
It is evident that both~\eqref{eq.ls.primal} and~\eqref{eq.ls.dual} satisfy the Slater condition.  Thus both of them attain the same optimal value.
Observe that $\rho\not\in\EXT_k$ if and only if the problem~\eqref{eq.ls.primal} has positive optimal value.  When that is the case and $u$ is such that $-\rho\bullet u - \lambda_{\max}(-\AC^{\dagger}(u)) > 0$ the point $W:=-u -\lambda_{\max}(-\AC^{\dagger}(u)) I_{ab}$ is an entanglement witness as it satisfies both $\rho\bullet W>0$ and $\lambda_{\max}(\AC^{\dagger}(W))\le 0$.

For $X\in \D(\HC_a\ot \HC)$ and $u\in \Herm(\HC_{ab})$ let $\gap(X,u)$ denote the {\em duality gap} between~\eqref{eq.ls.primal} and ~\eqref{eq.ls.dual}, that is,
\[
\gap(X,u):= \frac{1}{2}\|\AC(X)- \rho\|^2 + \frac{1}{2}\|u\|^2 + \rho\bullet u + \lambda_{\max}(-\AC^{\dagger}(u)).
\]
By convex duality, we have $\gap(X,u)\ge 0$ for all $X\in \D(\HC_a\ot \HC)$ and $u\in \Herm(\HC_{ab})$ with equality precisely when both $X\in \D(\HC_a\ot \HC)$ and $u\in \Herm(\HC_{ab})$ are optimal solutions to~\eqref{eq.ls.primal} and~\eqref{eq.ls.dual} respectively.

To determine whether $\rho \in \EXT_k$ or $\rho \not \in \EXT_k$, we use a first-order method to generate
a sequence of primal-dual pairs $(X_t, u_t) \in \D(\HC_a\ot \HC) \times \Herm(\HC_{ab})$ such that
$$
\gap(X_t,u_t) = \frac{1}{2}\|\AC X_t-\rho\|^2 + \frac{1}{2}\|u_t\|^2 + \rho \bullet u_t + \lambda_{\max}(-\AC^{\dagger} u_t) \rightarrow 0.
$$
Since the first two terms of $\gap(X_t,u_t)$ are non-negative, we are guaranteed to come across one of the following two possibilities.
\begin{itemize}
\item {\bf Detect entanglement:} find an entanglement witness
$W:=-u_t -\lambda_{\max}(-\AC^{\dagger} u_t) I_{ab}$  if $$\rho \bullet u_t + \lambda_{\max}(-\AC^{\dagger} u_t)<0.$$
This case is as depicted in Figure~\ref{fig.ext}(a).

\item {\bf Proximity of $\rho$ to $\EXT_k$:} find $X_t\in \D(\HC_a\ot \HC)$ such that $\frac{1}{2}\|\AC X_t-\rho\|^2\le \gap(X_t,u_t)$ is small if both $\gap(X_t,u_t)$ is small and
$$\rho \bullet u_t + \lambda_{\max}(-\AC^{\dagger} u_t)\ge0.$$
In this case entanglement is not detected and instead we find a point  $\tilde \rho = \AC X_t\in \EXT_k$ near $\rho$.  This case is as depicted in Figure~\ref{fig.ext}(b).
\blue{Our numerical experiments in Section~\ref{sec.examples} illustrate the conceptual content of Figure~\ref{fig.ext}(a) and Figure~\ref{fig.ext}(b).}
\end{itemize}

\begin{figure}[!t]
 \hspace{-.6in}
\begin{tabular}{c c} 
\begin{subfigure}{.6\textwidth}
    \centering
        \includegraphics[width=.6\textwidth]{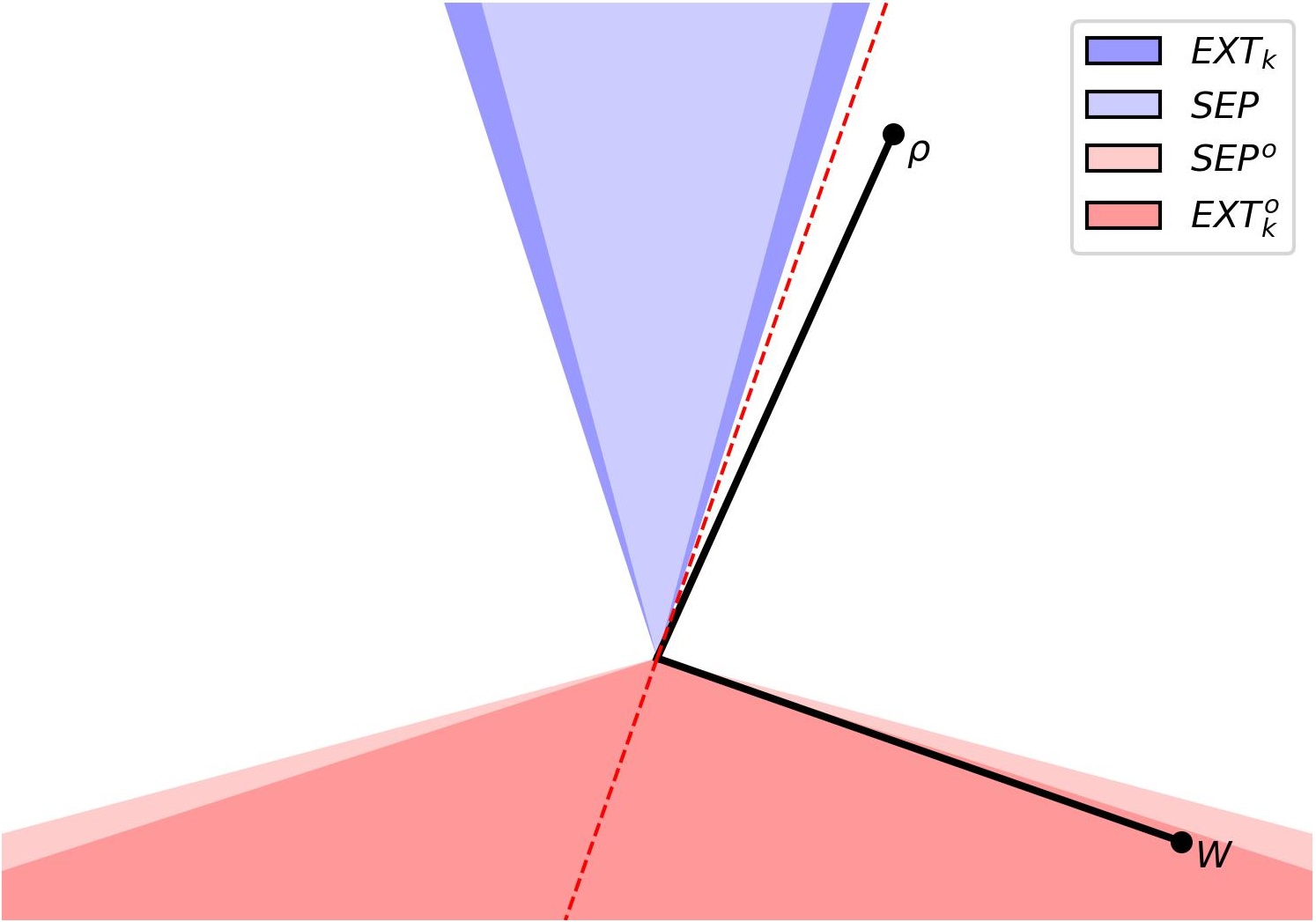}
        \caption{Entanglement detected for $\rho \not\in \EXT_k$}
    \end{subfigure} & \hspace{-.5in}
    \begin{subfigure}{.6\textwidth}
    \centering
        \includegraphics[width=.6\textwidth]{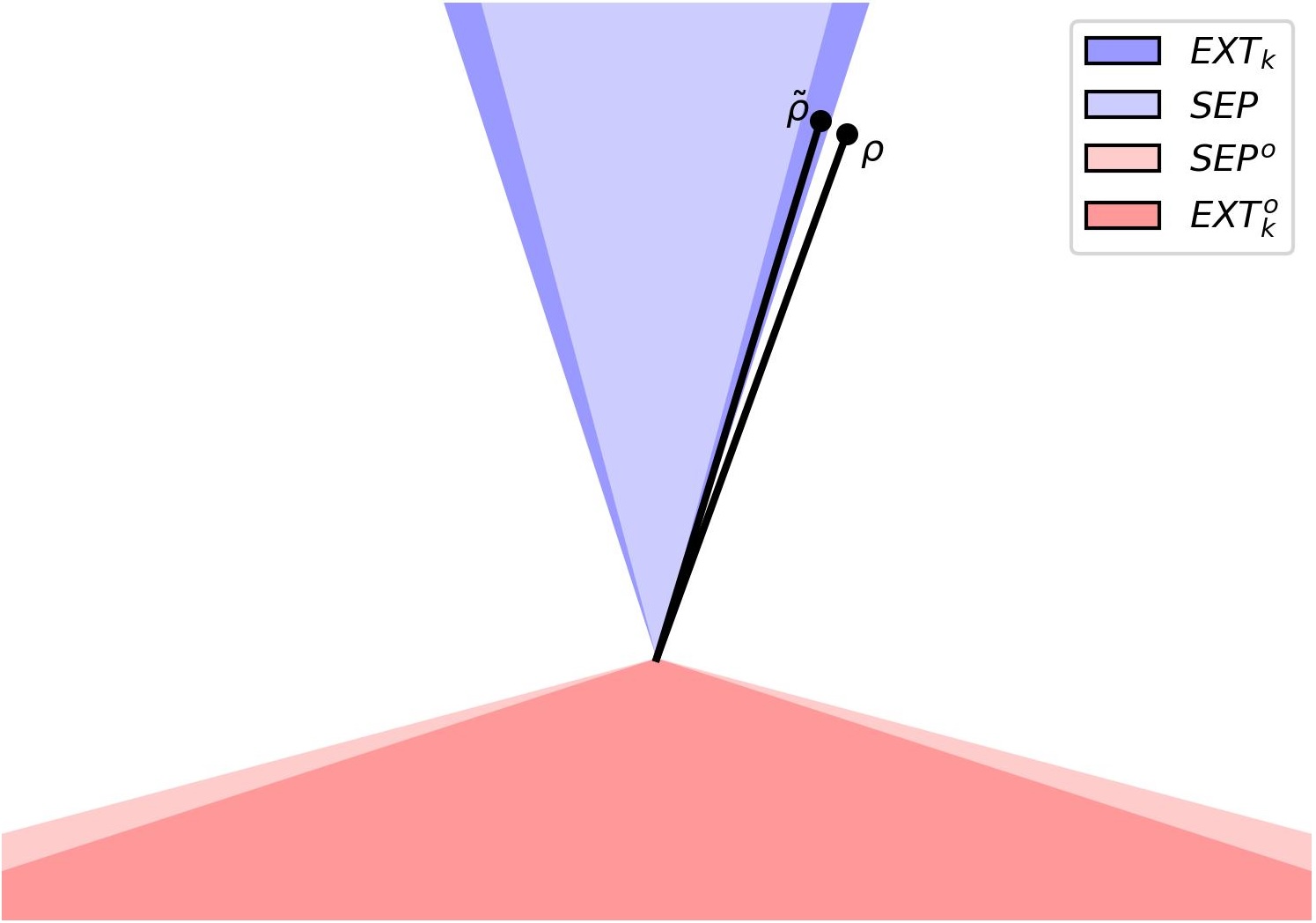}
        \caption{Proximity of $\rho$ to $\EXT_k$.}
    \end{subfigure} 
\end{tabular}
\caption{Entanglement detection via $\EXT_k$.}\label{fig.ext}
\end{figure}

Algorithm~\ref{algo.fom} describes a first-order algorithmic template to  generate a sequence $(X_t, u_t) \in \D(\HC_a\ot \HC) \times \Herm(\HC_{ab})$ such that $\gap(X_t,u_t) \rightarrow 0$.

\begin{algorithm}
    \caption{First-order algorithmic template for EXT}\label{algo.fom}
	\begin{algorithmic}        
    \State{\bf Input:} {$\rho \succeq 0$ with $I_{ab}\bullet \rho = 1$ and $X_0 = \frac{1}{\I\bullet \I}\I\in \D(\HC_a\ot \HC)$.
    }
    \State{\bf Output:} {$(X_t,u_t) \in \D(\HC_a\ot \HC)\times \Herm(H_{ab})$ such that $\gap(X_t,u_t) \rightarrow 0$}
    \For{$t=0,1,\dots,$}
    \State {$(u_{t},X_{t+1}):=\FOMupdate(t)$}
    \EndFor 
\end{algorithmic}    
\end{algorithm}
The main update in Algorithm~\ref{algo.fom}, namely, $(u_{t},X_{t+1}):=\FOMupdate(t),$ can be performed via one any of the following three iconic first-order algorithmic schemes as detailed in the appendix: Frank-Wolfe (FW),  Projected gradient (PG), and  fast projected gradient (FPG).  These schemes are viable because both a linear oracle and a projection oracle are easily computable for the constraint set in the least-squares problem~\eqref{eq.ls.primal}, namely $\D(\HC_a\ot \HC)$, \blue{since this set is a spectraplex~\cite{BenTal2021}.}
The following proposition, which is an immediate consequence of the scaling properties of $\AC$ and the properties of the Frank-Wolfe, projected gradient, and fast projected gradient methods as documented in~\cite{BenTal2021,Braun2022,GutmPena2023,Jaggi2013}.

\begin{proposition}\label{prop.fom}
The sequence $(X_t, u_t) \in \Delta \times \Herm(\HC_{ab})$ generated by Algorithm~\ref{algo.fom} satisfies
\begin{equation}\label{eq.fom.conv}
\gap(X_t, u_t) \le \left\{ \begin{array}{ll} \frac{4}{t+2} & \text{ for the FW update} \\ \frac{2d_k}{d_bt} & \text{ for the PG update}  \\ \frac{8d_k}{d_b(t+1)^2} & \text{ for the FPG update.} \end{array}\right.
\end{equation}
Each main iteration requires $O((d_ad_k)^2)$ floating point operations (flops) in addition to either a smallest eigenpair computation (for FW) or a full eigenvalue decomposition computation (for PG and FPG) of a Hermitian matrix of size $d_ad_k \times d_ad_k$.  Thus the number of flops per iteration is approximately
$O((d_ad_k)^2)$ for FW and $O((d_ad_k)^3)$ for PG and FPG.
\end{proposition}

\subsection{Formulation amenable to interior-point methods (IPM)}
\label{sec.ipm.ext}
The first system of constraints~\eqref{EXT.primal} also suggests the following conic programming approach that is amenable to interior-point methods (IPM).

Consider the conic programming problem
\begin{equation}\label{eq.cp.primal}
    \begin{aligned}
    \min_{X,\mu} \; &  \mu \\ 
        \text{subject to} \; & \AC X -\mu I = \rho \\
        & X \succeq 0
    \end{aligned}
\end{equation}
and its conic dual
\begin{equation}\label{eq.cp.dual}
    \begin{aligned}
    \max_{W} \; &  \rho\bullet W \\ 
        \text{subject to} \; & \AC^{\dagger} W \preceq 0 \\
        & -I\bullet W = 1.
    \end{aligned} \quad \Leftrightarrow \quad
        \begin{aligned}
    \max_{W,S} \; &  \rho\bullet W \\ 
        \text{subject to} \; & \AC^{\dagger} W + S = 0 \\
        & -I\bullet W = 1\\
        & S\succeq 0.
    \end{aligned}
\end{equation}
It is worth noting that the number of linear constraints (typically denoted $m$) in 
~\eqref{eq.cp.primal} is constant and equal to $d_a^2d_b^2$ regardless of the level $k$ of the hierarchy.  It is also evident that both~\eqref{eq.cp.primal} and~\eqref{eq.cp.dual} satisfy the Slater condition.  Thus both of them attain the same optimal value.
Observe that $\rho \in \EXT_k$ if and only if the primal problem has non-positive optimal value.  When that is not the case a dual feasible point $W$ is an entanglement witness if $\rho\bullet W > 0$.
Again to determine which of these two possibilities occurs, we use an interior-point method to generate a sequence of tuples $(X_t,\mu_t,W_t,S_t)$ such that $(X_t,\mu_t)$ is primal feasible, $(W_t,S_t)$ is dual feasible, and
\[
\gap(X_t,\mu_t,W_t,S_t):=
\mu_t - \rho\bullet W_t \rightarrow 0.
\]
We are guaranteed to come across one of the following two possibilities.
\begin{itemize}
\item {\bf Detect entanglement:} find an entanglement witness
$W_t$  if $\rho \bullet W_t > 0$.
\item {\bf Proximity of $\rho$ to $\EXT_k$:} find $X_t\succeq 0$ such that $\AC X_t -\rho = \mu_t I$ with $\mu_t \le \gap(X_t,\mu_t,W_t,S_t)$ is small if $\rho \bullet W_t \ge 0$.  In this case entanglement is not detected and instead we find a point $\AC X_t\in \EXT_k$ near $\rho$ (if $\mu_t > 0$) or find $\AC X \in \EXT_k$ (if $\mu_t \le 0$) such that $\AC X=\rho$.  
\end{itemize}

The primal-dual pair of conic programs~\eqref{eq.cp.primal}--\eqref{eq.cp.dual} can be readily implemented and solved with general-purpose conic programming solvers.  Since general-purpose solvers generate infeasible iterates, it is typically required that the pair~\eqref{eq.cp.primal}--\eqref{eq.cp.dual} be solved to full optimality before one of the above two possibilities is obtained.  Thus we propose a custom interior-point method solver for~\eqref{eq.cp.primal}--\eqref{eq.cp.dual} that generates {\em feasible} primal iterates $(X_t,\mu_t)$ and dual iterates $(W_t,S_t)$ such that $\gap(X_t,\mu_t,W_t,S_t)\rightarrow 0$.  An advantage of this algorithm is that it can detect entanglement early. The key step to ensure feasibility is to select an initial primal-dual feasible tuple $(X_0,\mu_0,W_0,S_0)$. To that end, select components $W_0,S_0$  as follows
\begin{equation}\label{init.dual}
W_0:=-\frac{1}{d_{ab}}I_{ab}, \;\;S_0 := -\AC^{\dagger}(W_0) = \frac{1}{d_{ab}}\I.
\end{equation}
It is evident that this pair $(W_0,S_0)$ is dual feasible with $S_0\succ 0$.
Select $(X_0,\mu_0)$ as follows
\begin{equation}\label{init.primal}
\bar X :=  \AC^\dagger(\AC\AC^\dagger )^{-1}\rho, \; \;\mu_0 := d_k, \; \;
X_0:= \bar X + \frac{\mu_0}{d_k} \I.
\end{equation}
Proposition~\ref{prop.norm} and $\AC(\I) = d_k I$ imply that $(X_0,\mu_0)$ is primal feasible with $X_0 \succ 0$.

The following proposition is an immediate consequence of standard properties of interior-point methods as detailed in~\cite{Renegar2001,BenTal2021}.

\begin{proposition}\label{prop.ipm}
The sequence $(X_t,\mu_t,W_t,S_t)$ generated by 
a feasible interior-point method applied to the primal-dual pair starting from the initial tuple primal-dual feasible tuple $(X_0,\mu_0,W_0,S_0)$ defined via~\eqref{init.dual} and ~\eqref{init.primal} generates $(X_t,\mu_t,W_t,S_t)$ such that $(X_t,\mu_t)$   is primal feasible, $(W_t,S_t)$ is dual feasible and
\begin{equation}\label{eq.ipm.conv}
\mu_t - \rho\bullet W_t =O
\left( 
\frac{\mu_0 - \rho\bullet W_0}{\exp(\sqrt{d_ad_k}t)}
\right) = O
\left( 
\frac{d_k}{\exp(\sqrt{d_ad_k}t)}
\right). 
\end{equation}
Each main iteration requires $O((d_ad_k)^3)$ floating point operations (flops).
\end{proposition}

\section{Algorithms to detect entanglement based on $\PST$}
\label{sec.PST}

This section provides parallel developments to those in Section~\ref{sec.EXT} but with the $\PST$ hierarchy in lieu of $\EXT$.  
As we detailed Section~\ref{sec.partition}, membership in $\PST_k$ can be formulated as the semidefinite system of constraints~\eqref{PST.primal}. 






\subsection{Formulation amenable to first-order methods (FOM)}
\label{sec.fom.pst}
Consider the constrained least-squares problem
\begin{equation}\label{eq.lspst.primal}
 \begin{aligned}
      \min_{X,Y} \; & \frac{1}{2}\|\AC(X)- \rho\|^2  + \frac{1}{2}\|\TC(X)-Y\|^2\\
	& (X,Y) \in \D(\HC_a\ot \HC)\times \D(\HC_a\ot \HC)
    \end{aligned}
\end{equation}
and its Fenchel dual
\begin{equation}\label{eq.lspst.dual}
    \begin{aligned}
      \max_{u,z} & -\frac{1}{2}\|u\|^2 - \frac{1}{2}\|z\|^2  - \rho\bullet u - \lambda_{\max}\left(-\AC^{\dagger}(u)-\TC^\dagger(z)\right) - \lambda_{\max}(z).
    \end{aligned}
\end{equation}
In light of~\eqref{PST.primal}, $\rho\not\in\PST_k$ if and only if the primal problem has positive optimal value.  Furthermore, if $u,z$ are such that $-\rho\bullet u - \lambda_{\max}(-\AC^{\dagger}(u) - \TC^\dagger(z)) - \lambda_{\max}(z) > 0$ then the component $W$ of the pair $(W,Z)$ defined as follows provides an entanglement witness:  
$$W:=-u -(\lambda_{\max}(-\AC^{\dagger} u-\TC^\dagger(z)) + \lambda_{\max}(z) ) I_{ab},\qquad  Z:= -z + \lambda_{\max}(z) \I.$$ 
As in Section~\ref{sec.EXT}, we use a first-order method to generate
a sequence of primal-dual tuples $(X_t, Y_t, u_t, z_t) \in \D(\HC_a\ot \HC)\times\D(\HC_a\ot \HC)\times\Herm(\HC_{ab})\times \Herm(\HC_a\ot \HC)$ such that $\gap(X_t, Y_t, u_t, z_t) \rightarrow 0.$
We are guaranteed to detect entanglement if $\rho \bullet u_t + \lambda_{\max}(-\AC^{\dagger} u_t - \TC^\dagger z_t)+\lambda_{\max}(z_t) < 0$
 or proximity to $\PST_k$ if both $\gap(X_t, Y_t, u_t, z_t)$ is small and
$$\rho \bullet u_t + \lambda_{\max}(-\AC^{\dagger} u_t - \TC^\dagger z_t)+\lambda_{\max}(z_t) \ge0.$$

\begin{algorithm}
    \caption{First-order algorithmic template for PST}\label{algo.fom.pst}
	\begin{algorithmic}     
    \State{\bf Input:} {$\rho \succeq 0$ with $I_{ab}\bullet \rho = 1$ and $X_0 = Y_0 = \frac{1}{\I\bullet \I}\I\in \D(\HC_a\ot \HC)$.
    }
    \State{\bf Output:} {$(X_t,Y_t,u_t,z_t) \in 
\D(\HC_a\ot \HC)\times \D(\HC_a\ot \HC)\times \Herm(H_{ab})\times \Herm(\HC_a\otimes \HC)$ such that $\gap(X_t,Y_t,u_t,z_t) \rightarrow 0$}
	
    \For{$t=0,1,\dots,$}
    \State {$(u_{t},z_t,X_{t+1},Y_{t+1}):=\FOMupdatepst(t)$}
    \EndFor 
\end{algorithmic}    
\end{algorithm}

Algorithm~\ref{algo.fom.pst} describes a first-order algorithmic template to  generate a sequence $(X_t, u_t) \in \D(\HC_a\ot \HC) \times \Herm(\HC_{ab})$ such that $\gap(X_t,u_t) \rightarrow 0$.  
The main update in Algorithm~\ref{algo.fom.pst}, namely, $$(u_{t},z_t,X_{t+1},Y_{t+1}):=\FOMupdatepst(t),$$ can be performed via the Frank-Wolfe, projected gradient, or fast projected gradient algorithmic schemes.  They are straightforward extensions of their counterparts for $\FOMupdate(t)$.  

The following analogue of Proposition~\ref{prop.fom} quantifies the convergence rate of $\gap(X_t,Y_t, u_t,z_t)\rightarrow 0$.

\begin{proposition}\label{prop.fom.pst}
The sequence $(X_t,Y_t, u_t,z_t) \in \D(\HC_a\ot \HC) \times \Herm(\HC_{ab})$ generated by Algorithm~\ref{algo.fom} satisfies
\begin{equation}\label{eq.fom.conv}
\gap(X_t,Y_t, u_t,z_t) \le \left\{ \begin{array}{ll} \frac{20}{t+2} & \text{ for the FW update} \\ \frac{8(d_k+2d_b)}{d_bt} & \text{ for the PG update}  \\ \frac{32(d_k+2d_b)}{d_b(t+1)^2} & \text{ for the FPG update.} \end{array}\right.
\end{equation}
Each main iteration requires 
$O((d_ad_k)^2)$ flops for FW and $O((d_ad_k)^3)$ for PG and FPG.
\end{proposition} 

\subsection{Formulation amenable to interior-point methods (IPM)}
\label{sec.ipm.pst}
The first system of constraints in~\eqref{PST.primal} also suggests the following conic programming approach that is amenable to interior-point methods (IPM).
Consider the primal-dual pair
\begin{equation}\label{eq.conic.pst}
    \begin{aligned}
    \min_{X,\mu} \; &  \mu \\ 
        \text{subject to} \; & \AC X -\mu I = \rho \\
        & X\succeq 0, \TC(X) \succeq 0,
    \end{aligned} \qquad \text{ and } \qquad
    \begin{aligned}
    \max_{W,S,Z} \; &  \rho\bullet W \\ 
        \text{subject to} \; & \AC^{\dagger} W + S +  \TC^\dagger(Z) = 0 \\
        & -I\bullet W = 1\\
        & S \succeq 0, Z \succeq 0.
    \end{aligned}
\end{equation}
Once again, both of these conic programs satisfy the Slater condition and thus attain the same optimal value.  We 
use a feasible interior-point method to generate a sequence of tuples $(X_t,\mu_t,W_t,S_t,Z_t)$ such that $(X_t,\mu_t)$ is primal feasible, $(W_t,S_t,Z_t)$ is dual feasible, and
\[
\gap(X_t,\mu_t,W_t,S_t,Z_t):=
\mu_t - \rho\bullet W_t \rightarrow 0.
\]
Again we are guaranteed to detect entanglement or proximity to $\PST_k$.
Select $(W_0,S_0,Z_0)$ as follows
\begin{equation}\label{init.dual.pst}
W_0:=-\frac{1}{d_{ab}}I_{ab}, \;\;S_0 := Z_0 := -\frac{1}{2}\AC^{\dagger}(W_0) = \frac{1}{2d_{ab}}\I.
\end{equation}
It is evident that this pair $(W_0,S_0,Z_0)$ is dual feasible with $S_0\succ 0,Z_0\succ 0$. Select $(X_0,\mu_0)$ as follows
\begin{equation}\label{init.primal.pst}
\bar X :=  \AC^\dagger(\AC\AC^\dagger )^{-1}\rho, \; \;\mu_0 := d_k, \; \;
X_0:= \bar X + \frac{\mu_0}{d_k} \I.
\end{equation}

Proposition~\ref{prop.norm} and $\AC(\I) = d_k I$ imply that $(X_0,\mu_0)$ is primal feasible with $X_0\succ 0, \TC(X_0) \succ 0$.

The following proposition, analogous to Proposition~\ref{prop.ipm}, is also an immediate consequence of standard properties of interior-point methods~\cite{Renegar2001,BenTal2021}.

\begin{proposition}\label{prop.ipm.pst}
a feasible interior-point method applied to the primal-dual pair starting from the initial tuple primal-dual feasible tuple $(X_0,\mu_0,W_0,S_0,Z_0)$ defined via~\eqref{init.dual.pst} and ~\eqref{init.primal.pst} generates a sequence $(X_t,\mu_t,W_t,S_t,Z_t)$  such that $(X_t,\mu_t)$   is primal feasible, $(W_t,S_t,Z_t)$ is dual feasible and
\begin{equation}\label{eq.ipm.conv}
\mu_t - \rho\bullet W_t =O
\left( 
\frac{\mu_0 - \rho\bullet W_0}{\exp(\sqrt{d_ad_k}t)}
\right) = O
\left( 
\frac{d_k}{\exp(\sqrt{d_ad_k}t)}
\right). 
\end{equation}
Each main iteration requires $O((d_ad_b)^2(d_ad_k)^6)$ flops.
\end{proposition}

\section{Examples and numerical experiments}
\label{sec.examples}
\blue{We performed three main sets of numerical experiments to test the effectiveness of our three main \blue{contributions towards} entanglement detection.  First, we tested the operator $\AC$ and the new PST hierarchy on PICOS implementations.  Second, we tested the EXT and PST \blue{hierarchies} for various levels.  Third, we compared and our FOM and IPM algorithms for PST with its PICOS implementation.  Our experiments use five parametrized collections of entangled states documented in the literature: isotropic (parameter $\lambda \in [0,1]$), Werner (parameter $\lambda \in [0,1]$), $3\times 3$ qutrit (parameter $\alpha \in [0,2.5]$), $3\times 3$ with PPT entanglement (parameter $y \in [0,1]$), $2 \times 4$ with PPT entanglement (parameter $x \in [0,1]$).   Detailed descriptions of these examples are presented in the appendix.   For ease of exposition, in this section we summarize results for the isotropic, qutrit, and $3 \times 3$ collections which are particularly interesting.  The appendix present summaries for the other two collections.
\subsection{Effectiveness of $\AC$ and $\PST$}
The first set of experiments highlights the effectiveness of $\AC$ and $\PST$.  To that end, we compared PICOS implementations of the conic formulations~\eqref{eq.cp.primal} for EXT and~\eqref{eq.conic.pst} for PST, an analogous conic formulation for DPS, and the naive alternative to~\eqref{eq.conic.pst} that does not rely on $\AC$ but instead describes $\PST_k$ via matrices in $\Herm^+(\HC_a \otimes \HC_B)$.  In all cases we used MOSEK SDP solver with default parameters since it is the fastest and most accurate 
general purpose SDP solver available in PICOS.  In particular, MOSEK's primal–dual interior-point method was used with automatic scaling and presolve enabled, double precision arithmetic (float64), and default
feasibility and optimality tolerances of order $10^{-8}$. No solver parameters were manually tuned.  All experiments were run on a single Apple iMac with an Apple M1 CPU (8 cores) and 16 GB RAM, running macOS Tahoe version 26.3.1.}

\begin{figure}[!ht]
\hspace{-.6in}
\begin{tabular}{c} 
\begin{subfigure}{.6\textwidth}
    \centering
        \includegraphics[width=.8\textwidth]{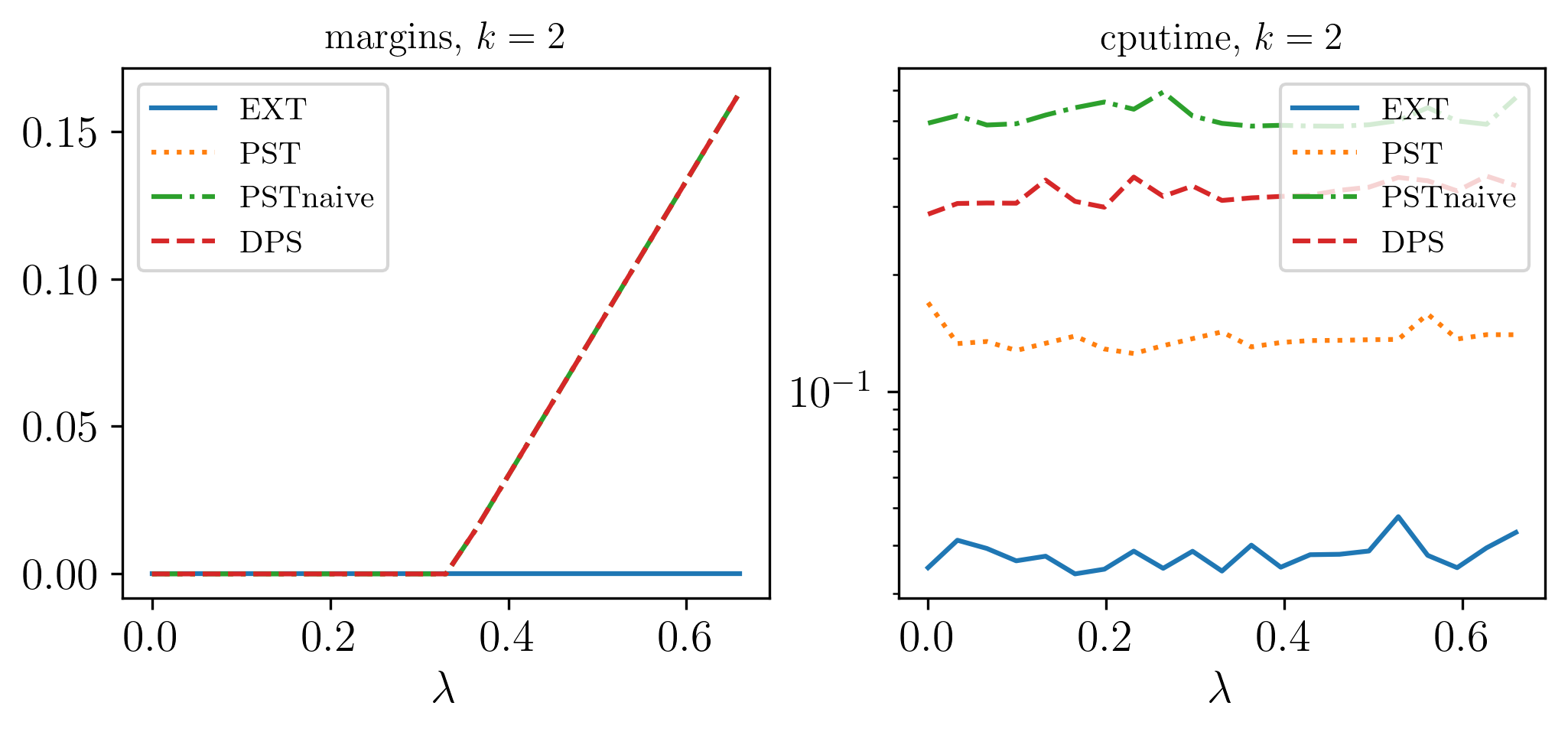}
        \caption{k = 2}
    \end{subfigure} 
    \hspace{-.4in}
    \begin{subfigure}{.6\textwidth}
    \centering
        \includegraphics[width=.8\textwidth]{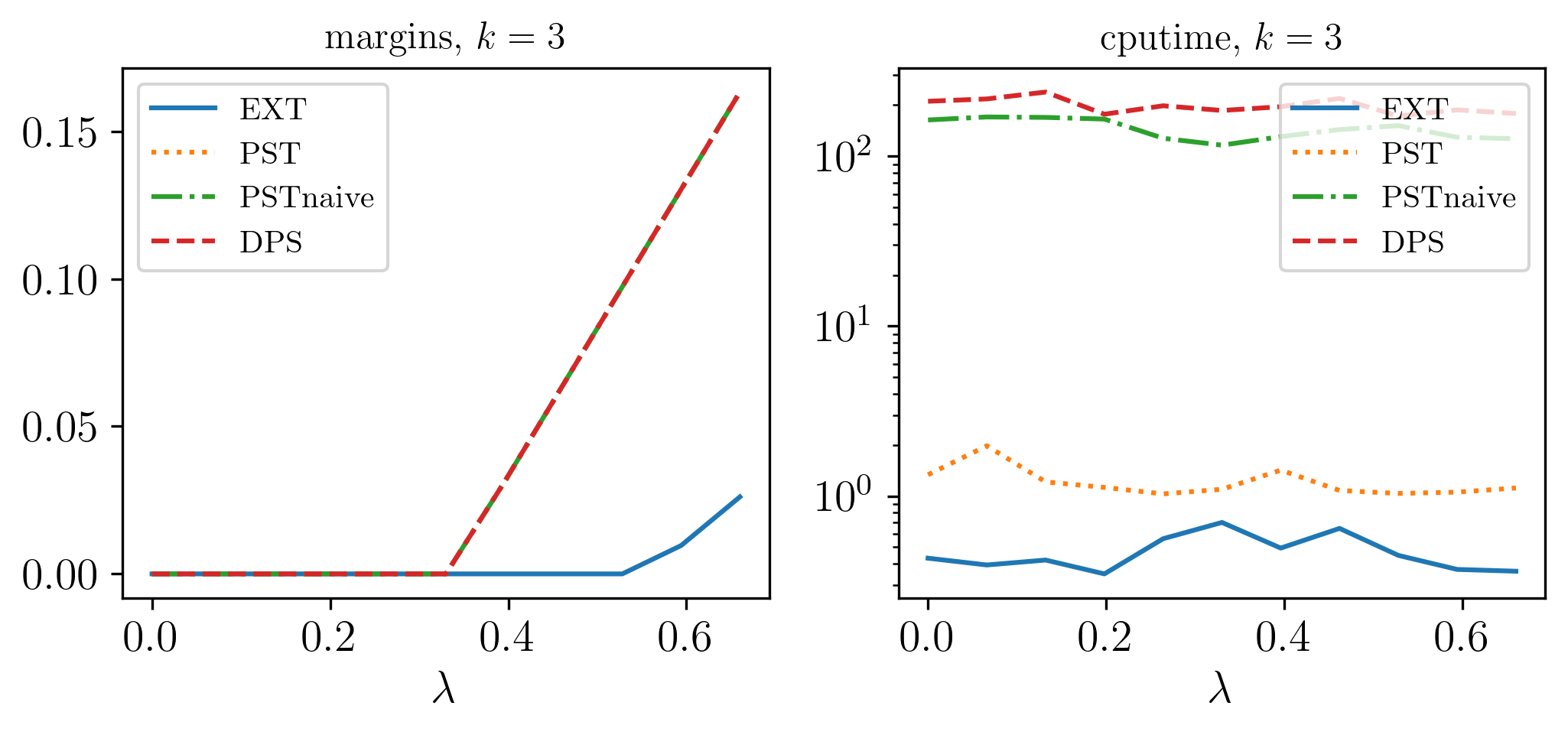}
        \caption{k = 3.}
    \end{subfigure} 
\end{tabular}
\caption{Comparison of EXT, PST, PST (naive), and DPS on the isotropic collection. The lines in the margins plots for PST, PST (naive), and DPS overlap.}\label{figApstiso}
\end{figure}
\begin{figure}[!ht]
 \hspace{-.6in}
\begin{tabular}{c} 
\begin{subfigure}{.6\textwidth}
    \centering
        \includegraphics[width=.8\textwidth]{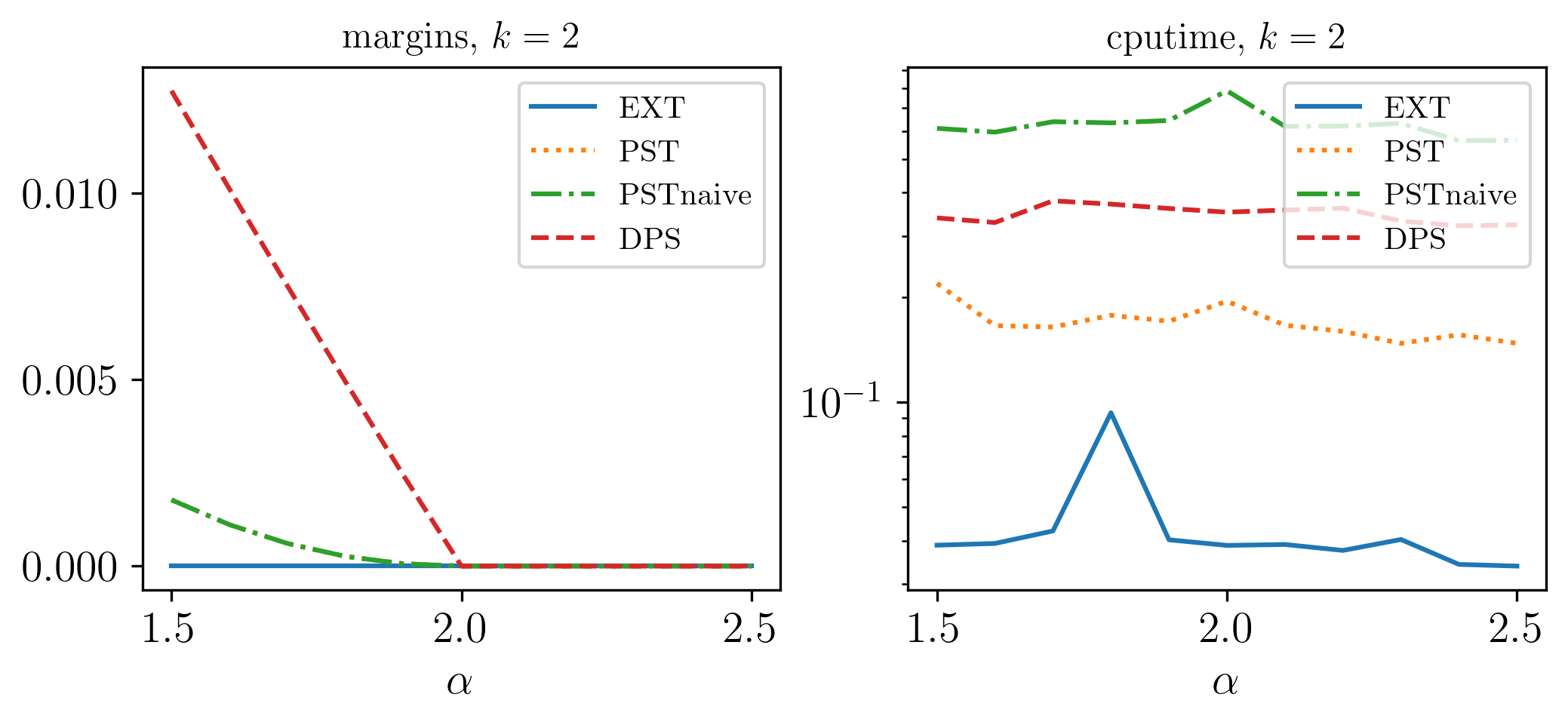}
        \caption{k = 2}
    \end{subfigure} 
    \hspace{-.4in}
    \begin{subfigure}{.6\textwidth}
    \centering
        \includegraphics[width=.8\textwidth]{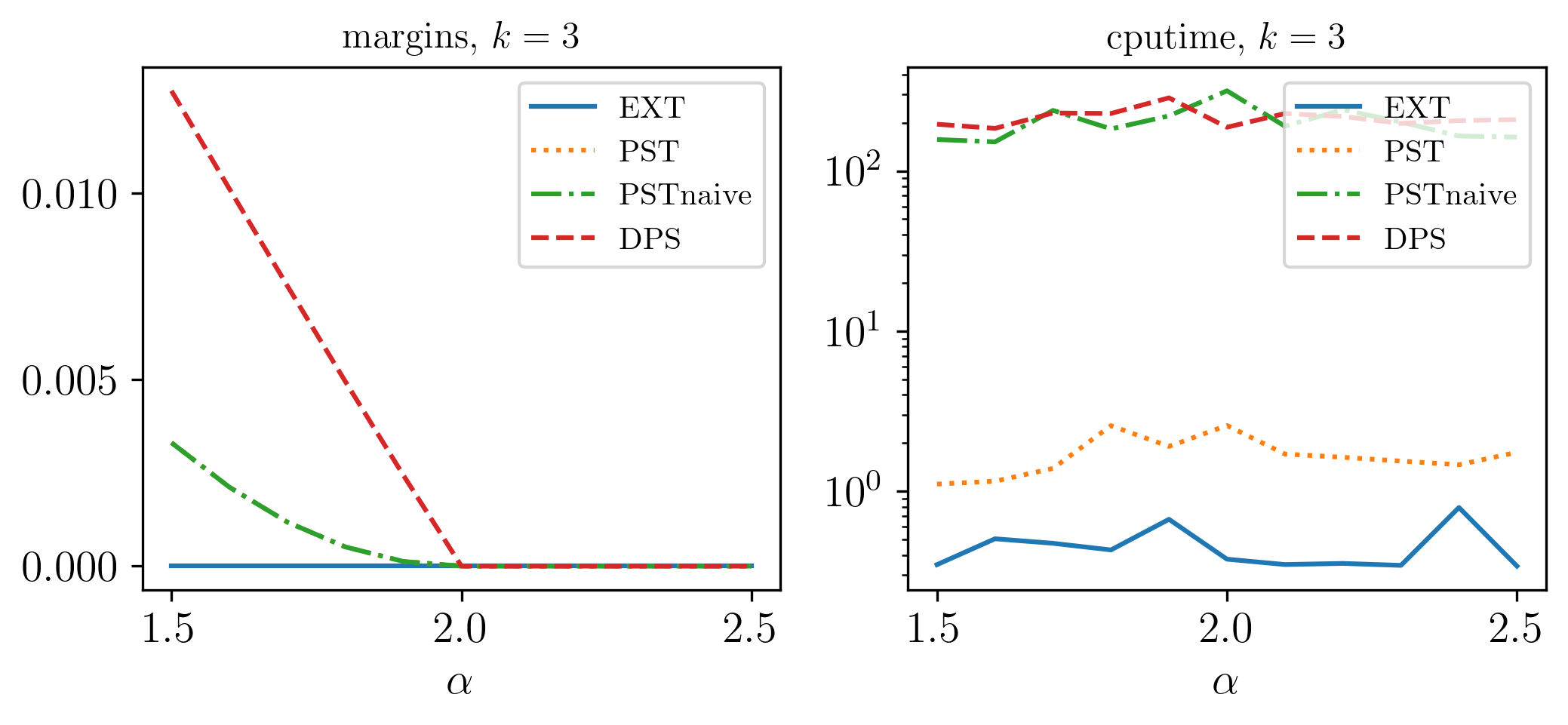}
        \caption{k = 3.}
    \end{subfigure} 
\end{tabular}
\caption{Comparison of EXT, PST, PST (naive), and DPS on the qutrit collection.
The lines in the margins plots for PST and PST (naive) overlap.}\label{figApstqut}
\end{figure}
\begin{figure}[!ht]
 \hspace{-.6in}
\begin{tabular}{c} 
\begin{subfigure}{.6\textwidth}
    \centering
        \includegraphics[width=.8\textwidth]{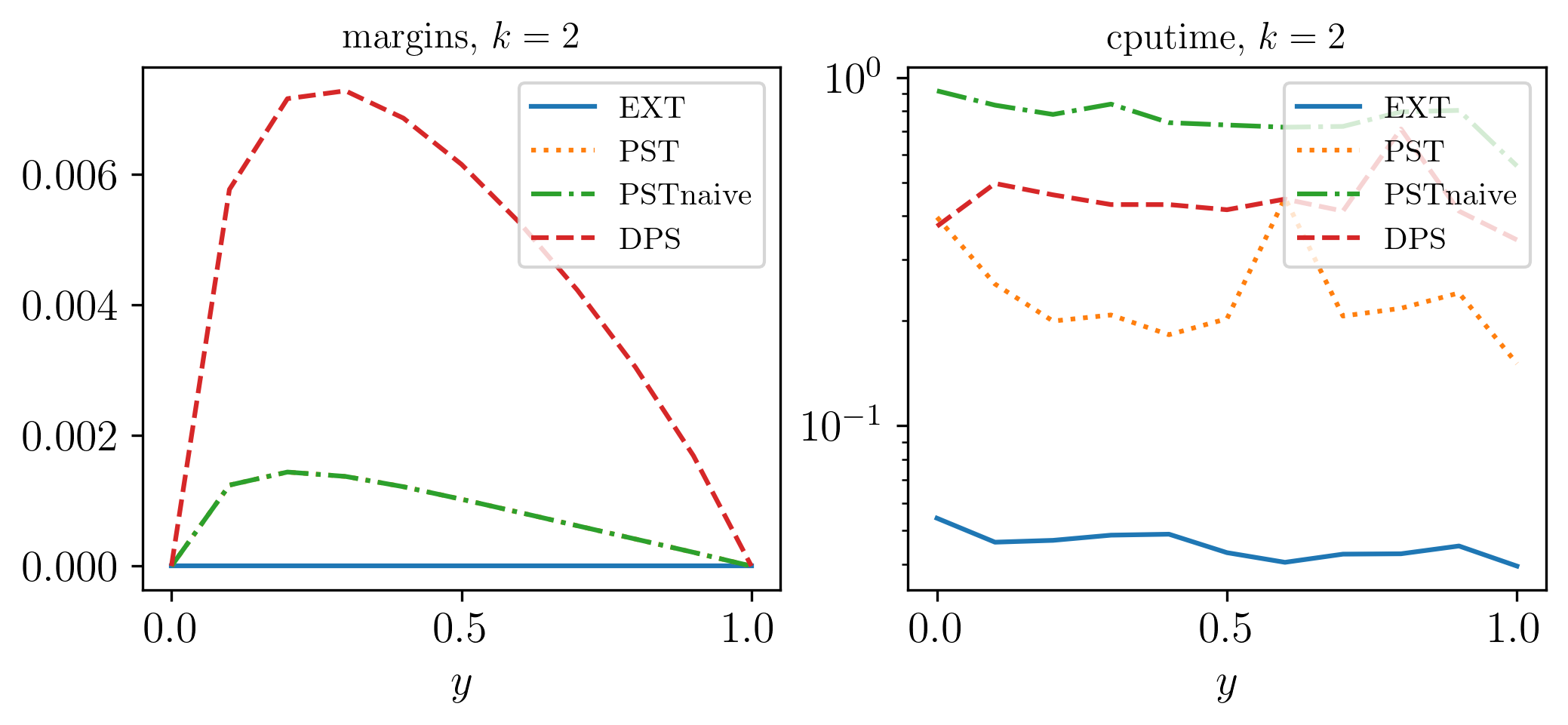}
        \caption{k = 2}
    \end{subfigure} 
    \hspace{-.4in}
    \begin{subfigure}{.6\textwidth}
    \centering
        \includegraphics[width=.8\textwidth]{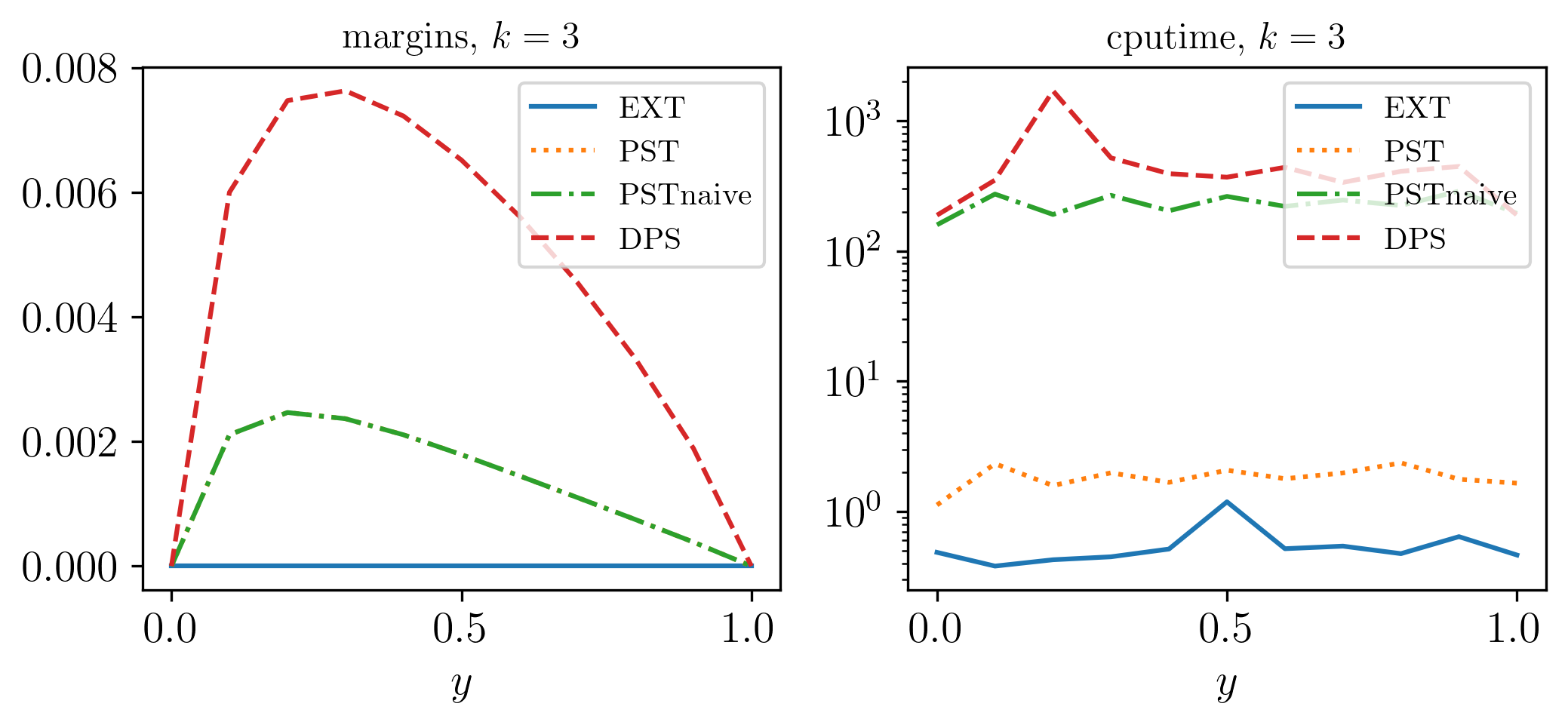}
        \caption{k = 3.}
    \end{subfigure} 
\end{tabular}
\caption{Comparison of EXT, PST, PST (naive), and DPS on the $3 \times 3$ collection.
The lines in the margins plots for PST and PST (naive) overlap.}\label{figApst3by3}
\end{figure}

\blue{Figure~\ref{figApstiso},  Figure~\ref{figApstqut}, and
Figure~\ref{figApst3by3} summarize our results for the isotropic, qutrit, and
$3\times 3$ collections respectively.  
 The following conclusions can  be inferred
from these results:
\begin{itemize}
    \item The combination of the operator $\AC$ and the hierarchy $\PST$ yields a PICOS implementation that runs orders of magnitude faster than both the naive $\PST$ implementation that does not use $\AC$ and \blue{even} the efficient implementation of $\DPS$ that uses $\AC$.
    \item The CPU time in each of the PICOS implementations shows little dependence on the state.
    \item Although DPS is tighter than PST along some directions  $\PST_2$ detects entanglement in all cases when $\DPS_2$ does. 
\end{itemize}
\subsection{$\EXT$ and $\PST$ for various levels}
The second set of experiments compares PICOS implementations of EXT and PST for various levels of $k$. Again these experiments were conducted with MOSEK SDP solver with default parameters and on the same hardware as the first set of experiments.
Figure~\ref{figextpstiso} and Figure~\ref{figextpst3by3} summarize our results for the isotropic, qutrit, and $3\times 3$ collections respectively.
The main conclusion from these results summarized is that at all levels PST is substantially tighter than EXT at the expense of a moderate increase in computational cost.  Hence we concentrate our subsequent discussion to the PST hierarchy only.}
\begin{figure}[!ht]
    \centering
        \includegraphics[width=.5\textwidth]{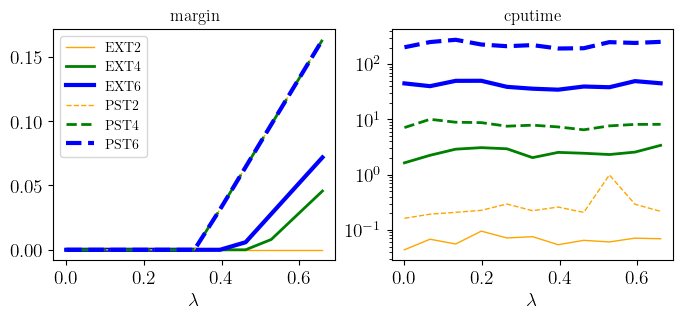}
\caption{Comparison of EXT and PST on the isotropic collection.
The PST lines in the margins plots overlap.
}\label{figextpstiso}
\end{figure}
\vspace{-.2in}
\begin{figure}[!ht]
 \hspace{-.6in}
\begin{tabular}{c} 
\begin{subfigure}{.6\textwidth}
    \centering
        \includegraphics[width=.8\textwidth]{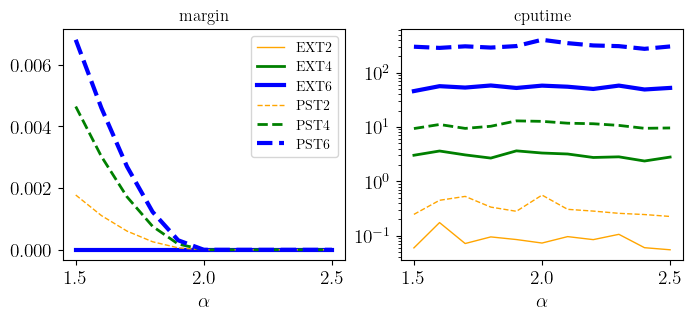}
        \caption{Qutrit states}
    \end{subfigure} 
    \hspace{-.4in}
    \begin{subfigure}{.6\textwidth}
    \centering
        \includegraphics[width=.8\textwidth]{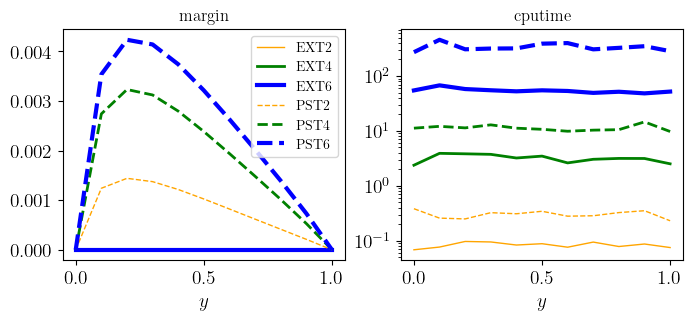}
        \caption{$3\times 3$ states}
    \end{subfigure} 
\end{tabular}
\caption{Comparison of EXT and PST on the qutrit and $3 \times 3$ collections. The EXT lines in the margins plots overlap.}\label{figextpst3by3}
\end{figure}
\blue{
\subsection{Effectiveness of FOM and IPM algorithms for the PST hierarchy}
The third set of experiments highlights the effectiveness of our FOM and IPM algorithms.  To that end, we compared the PICOS implementation of~\eqref{eq.conic.pst} of the PST hierarchy with Python implementations of the FOM and  IPM algorithms described in Section~\ref{sec.PST}. Our implementations of FOM and IPM use plain {\tt NumPy} for numerical linear algebra operations with double precision arithmetic (float64) and no additional stabilization. Thus the implementations of FOM and IPM do not incorporate any advantage over the PICOS implementation. It is important to note that by design, upon termination all three implementations yield a candidate entanglement witness $W\in \PST_k^\circ$ with $I\bullet W = -1$ or a certificate that $\rho$ is equal to or near a point $\tilde \rho\in\PST_k$.  Therefore our comparison of the algorithms is consistent.  We should also note some  important details about the termination criteria for each of the algorithms.  Since MOSEK SDP solver uses a primal-dual infeasible algorithm and does not support the use of warm-starts, PICOS terminates when it solves~\eqref{eq.conic.pst} to optimality and thus it either finds a witness with optimal margin or a certificate that $\rho \in \PST_k$.  By contrast, our FOM and IPM algorithms take advantage of the following more relaxed termination criteria: they terminate when a witness with a safe (but suboptimal) margin is found or when a point $\tilde \rho \in PST_k$ near $\rho$ is found.  For visualization purposes, the latter case is displayed via a (fairly small) negative margin in the figures below.  Figure~\ref{figcompareiso}, Figure~\ref{figcomparequt}, and Figure~\ref{figcompare3by3} summarize our results for the isotropic, qutrit, and $3\times 3$ collections respectively. To illustrate the scalability of FOM, the last two subplots in each case display results for level $3k$ as well, that is, for level up to 18.  We infer the following:
\begin{itemize}
    \item FOM is the fastest entanglement detection procedure (by several orders of magnitude) when the state is sufficiently entangled.  In some instances FOM is faster than the other methods even if we triple the level of the hierarchy.
    \item In the few cases when states are near the boundary of $\PST_k$, FOM may not detect entanglement while the more accurate algorithms IPM or baseline do.  However, in those cases the margin of entanglement is quite small.
\end{itemize}
}

\begin{figure}[!ht]
 \hspace{-.6in}
\begin{tabular}{c} 
\begin{subfigure}{.6\textwidth}
    \centering
        \includegraphics[width=.8\textwidth]{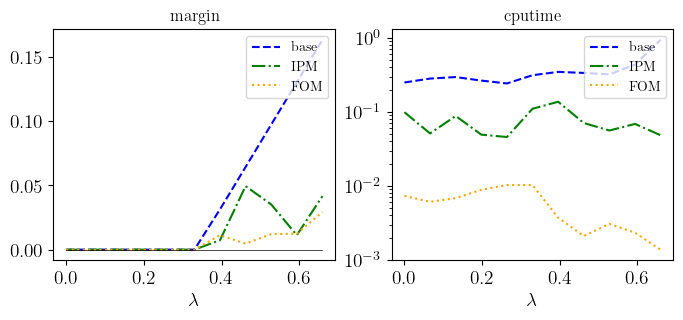}
        \caption{k = 2}
    \end{subfigure} 
    \hspace{-.4in}
    \begin{subfigure}{.6\textwidth}
    \centering
        \includegraphics[width=.8\textwidth]{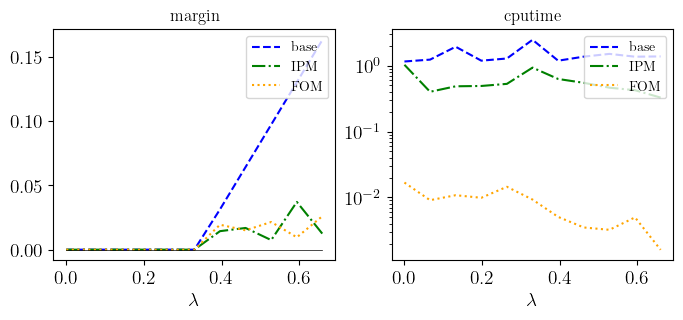}
        \caption{k = 3}
    \end{subfigure} 
\vspace{.2in}
\\
\begin{subfigure}{.6\textwidth}
    \centering
        \includegraphics[width=.8\textwidth]{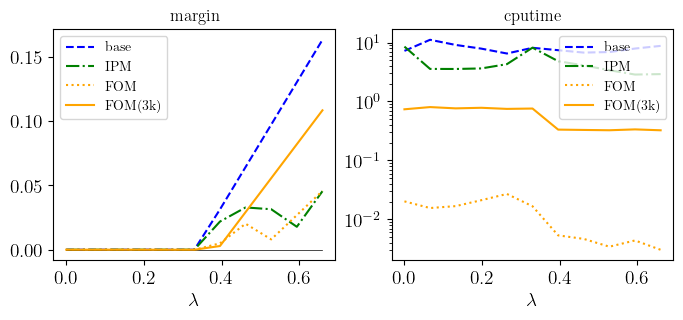}
        \caption{k = 4}
    \end{subfigure} 
    \hspace{-.4in}
    \begin{subfigure}{.6\textwidth}
    \centering
        \includegraphics[width=.8\textwidth]{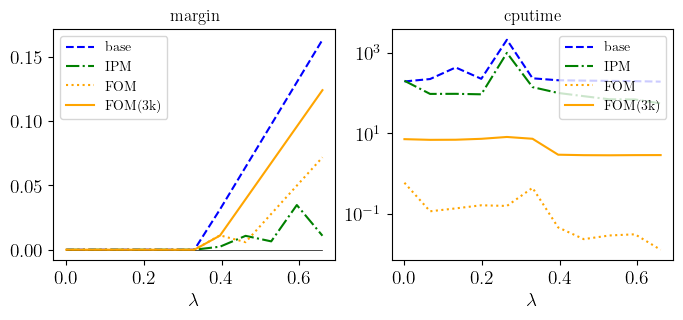}
        \caption{k = 6}
    \end{subfigure}     
\end{tabular}
\caption{Comparison of FOM, IPM, and baseline for PST on the isotropic collection.}\label{figcompareiso}
\end{figure}

\begin{figure}[!ht]
 \hspace{-.6in}
\begin{tabular}{c} 
\begin{subfigure}{.6\textwidth}
    \centering
        \includegraphics[width=.8\textwidth]{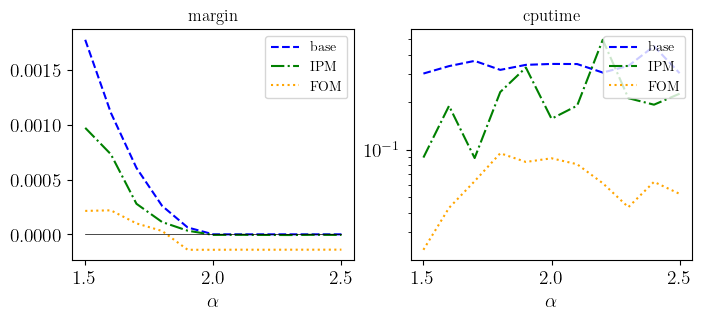}
        \caption{k = 2}
    \end{subfigure} 
    \hspace{-.4in}
    \begin{subfigure}{.6\textwidth}
    \centering
        \includegraphics[width=.8\textwidth]{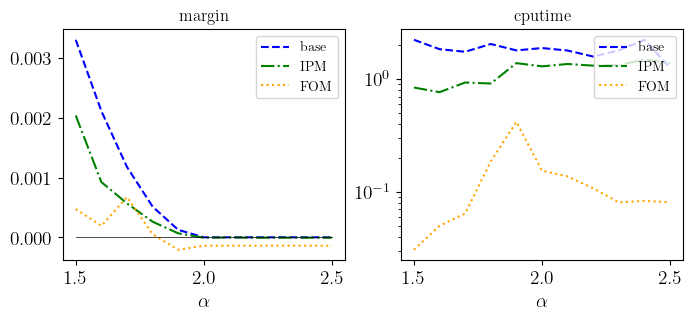}
        \caption{k = 3}
    \end{subfigure} 
\vspace{.2in}
\\
\begin{subfigure}{.6\textwidth}
    \centering
        \includegraphics[width=.8\textwidth]{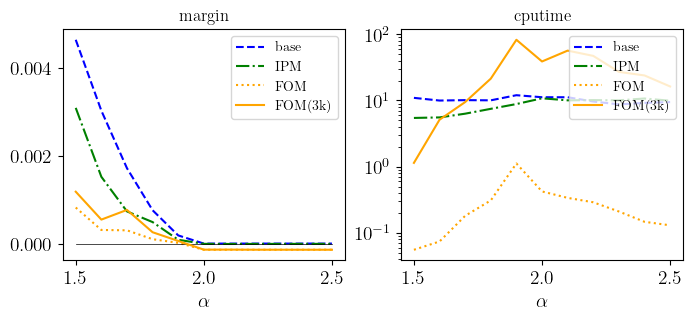}
        \caption{k = 4}
    \end{subfigure} 
    \hspace{-.4in}
    \begin{subfigure}{.6\textwidth}
    \centering
        \includegraphics[width=.8\textwidth]{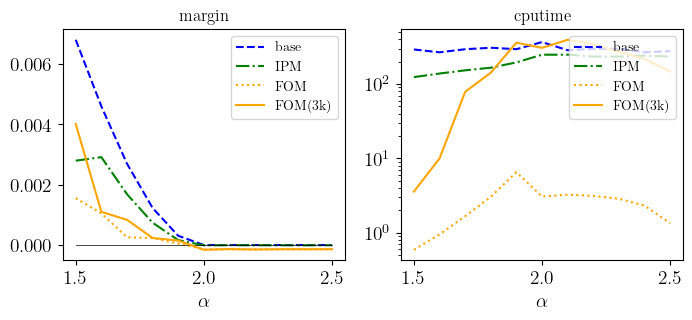}
        \caption{k = 6}
    \end{subfigure}     
\end{tabular}
\caption{Comparison of FOM, IPM, and baseline for PST on the qutrit collection.}\label{figcomparequt}
\end{figure}

\begin{figure}[!ht]
 \hspace{-.6in}
\begin{tabular}{c} 
\begin{subfigure}{.6\textwidth}
    \centering
        \includegraphics[width=.8\textwidth]{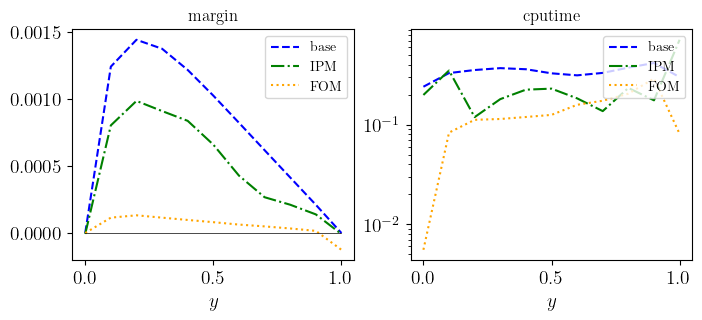}
        \caption{k = 2}
    \end{subfigure} 
    \hspace{-.4in}
    \begin{subfigure}{.6\textwidth}
    \centering
        \includegraphics[width=.8\textwidth]{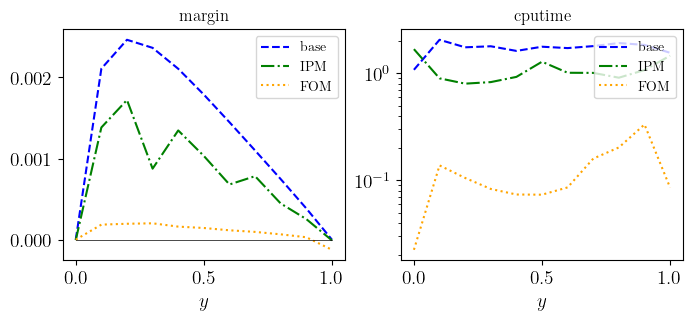}
        \caption{k = 3}
    \end{subfigure} 
\vspace{.2in}
\\
\begin{subfigure}{.6\textwidth}
    \centering
        \includegraphics[width=.8\textwidth]{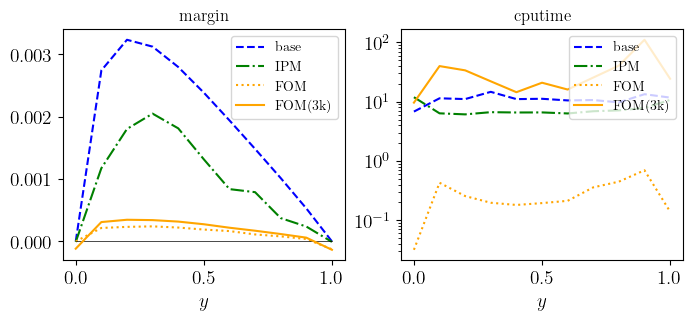}
        \caption{k = 4}
    \end{subfigure} 
    \hspace{-.4in}
    \begin{subfigure}{.6\textwidth}
    \centering
        \includegraphics[width=.8\textwidth]{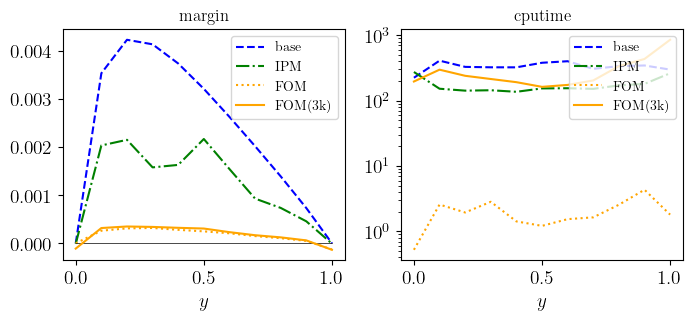}
        \caption{k = 6}
    \end{subfigure}     
\end{tabular}
\caption{Comparison of FOM, IPM, and baseline for PST on  $3 \times 3$ collection.}\label{figcompare3by3}
\end{figure}

\section{Concluding remarks}

We introduced a new semidefinite programming (SDP) hierarchy for entanglement
detection, PST, that combines the computational efficiency of the EXT hierarchy
with a tighter approximation to the set of separable states. We provided
explicit, polynomially scalable formulations of EXT and PST using partition
operators, enabling practical application at higher hierarchy levels without
exponential blow-up.

Two algorithmic frameworks were developed: a least-squares formulation suitable
for first-order methods (FOM) and a conic-programming formulation for a custom
interior-point method (IPM). Both approaches satisfy the Slater condition and
maintain favorable numerical scaling. Our FOM
achieve scalability beyond
previously reported results, while the custom IPM delivers robustness and, in
many cases, an earlier entanglement witness recovery.
Numerical experiments confirm that PST outperforms EXT and that our tailored
algorithms improve both scalability and detection power. These results
demonstrate that careful integration of the problem structure into the
algorithm design can significantly advance entanglement detection.

Further advancements may be possible by exploiting additional structure in the
algorithms discussed here. For instance, the full eigendocomposition used in FOMs can be
circumvented since the algorithm only requires the leading eigenvalues and
eigenvectors. Another interesting pursuit may be to replace
the Hilbert-Schmidt distance minimized in our FOM by a distance that is
truly monotone under quantum channels and thus provides a bona-fide measure for
entanglement~\cite{Ozawa00}. Finally, extending this entanglement detection approach to
multipartite entanglement forms an interesting challenge~(see~\cite{DPS05}
and~\cite{OtfriedGeza09}).

\section*{Acknowledgements}

V.S. is supported by the U.S. Department of Energy, Office of Science, National
Quantum Information Science Research Centers, Co-design Center for Quantum Advantage (C2QA) contract (DE- SC0012704). J.P. is partially supported by the Bajaj Family Chair at the Tepper School of Business, Carnegie Mellon University.


\printbibliography

\newpage

\section{Appendix}
\subsection{Proof of Proposition~\ref{prop.ext}, Proposition~\ref{prop.dps}, and Proposition~\ref{prop.pst}}
Proposition~\ref{prop.ext}, Proposition~\ref{prop.dps}, and Proposition~\ref{prop.pst} are immediate consequences of the following lemma.

\begin{lemma}\label{lemma.symm} Suppose $\Pi:\HC_B \rightarrow \HC_{B}$ is the orthogonal projection onto $\Sym(\HC_{B})$ and
$P:\HC\rightarrow \HC_B$ is a linear operator such that
$P(\HC) = \Sym(\HC_B)$. Then 
\[
\{Y \in L(\HC_B)  \;|\;  Y = \Pi Y \Pi\} = \{P X  P^\dagger \;|\; X\in L(\HC)\}.
\]
In particular,
\begin{align*}
&\{\rho_{aB} \in \Herm(\HC_a\ot\HC_B)  \;|\;  \rho_{aB} = (I_a\ot \Pi) \rho_{aB} (I_a\ot \Pi), \; \rho_{aB}\succeq 0\}\\  &= \{(I_a\ot P) X  (I_a\ot P^\dagger) \;|\; X\in \Herm(\HC_a\ot\HC_B), \, X\succeq 0\}.
\end{align*}
\end{lemma}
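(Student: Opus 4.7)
The plan is to prove the two set equalities by showing both inclusions and keeping track of positivity throughout. The central observation is that the hypothesis $P(\HC)=\Sym(\HC_B)$ means the range of $P$ equals the range of $\Pi$, so $\Pi P = P$ and (taking adjoints) $P^\dagger \Pi = P^\dagger$. I will use this identity repeatedly.

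First I would establish the inclusion $\supseteq$ for the first set equality. Given any $X\in L(\HC)$, the identities $\Pi P = P$ and $P^\dagger \Pi = P^\dagger$ give
\[
\Pi(PXP^\dagger)\Pi = (\Pi P)X(P^\dagger\Pi) = PXP^\dagger,
\]
so $PXP^\dagger$ lies in the left-hand set. For the reverse inclusion $\subseteq$, I would invoke a right inverse of $P$ on $\Sym(\HC_B)$: since $P$ maps onto $\Sym(\HC_B)$, the restriction of $P$ to $\ker(P)^\perp$ is a linear isomorphism onto $\Sym(\HC_B)$, and its Moore--Penrose pseudoinverse $Q:\HC_B\to\HC$ satisfies $PQ=\Pi$ and hence also $Q^\dagger P^\dagger = \Pi$. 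Now given $Y$ with $Y=\Pi Y\Pi$, set $X:=QYQ^\dagger \in L(\HC)$. Then
\[
PXP^\dagger = (PQ)Y(Q^\dagger P^\dagger) = \Pi Y\Pi = Y,
\]
which exhibits $Y$ as an element of the right-hand set.

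Next I would upgrade the argument to the Hermitian positive-semidefinite setting. If $X\in\Herm(\HC)$ with $X\succeq 0$, then $PXP^\dagger$ is automatically Hermitian and positive semidefinite. Conversely, if $Y\in\Herm(\HC_B)$ with $Y\succeq 0$ and $Y=\Pi Y\Pi$, the element $X:=QYQ^\dagger$ constructed above is Hermitian and satisfies $X\succeq 0$, because $\langle v, QYQ^\dagger v\rangle = \langle Q^\dagger v, Y Q^\dagger v\rangle \ge 0$ for every $v\in\HC$. This settles the Hermitian PSD version of the first equality.

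Finally, to obtain the ``In particular'' statement, I would apply the equality just proved to the linear map $I_a\otimes P:\HC_a\otimes \HC \to \HC_a\otimes\HC_B$. Its range equals $\HC_a\otimes \Sym(\HC_B)$, whose orthogonal projector is $I_a\otimes\Pi$, so the PSD version of the first equality yields the second one verbatim after this substitution. The only step requiring care is the construction of $Q$ and the verification that both directions respect the Hermitian PSD constraint; the rest is essentially bookkeeping with the identities $\Pi P=P$ and $PQ=\Pi$.
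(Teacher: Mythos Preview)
Your proof is correct and close in spirit to the paper's, but the two differ in the $\subseteq$ direction.  The paper avoids introducing any pseudoinverse and instead factors $Y$ in two steps: from $\Pi Y = Y$ it writes each column of $Y$ as a column of $P$ applied to something, i.e.\ $Y = PZ$ for some $Z\in L(\HC,\HC_B)$; then from $Y=Y\Pi$ it writes $Z\Pi = XP^\dagger$ for some $X\in L(\HC)$, yielding $Y = PXP^\dagger$.  Your route via the Moore--Penrose inverse $Q$ with $PQ=\Pi$ is more concise and has the advantage that the single formula $X=QYQ^\dagger$ makes Hermiticity and positive semidefiniteness of $X$ immediate---a point the paper leaves implicit when passing to the ``In particular'' statement.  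The paper's argument, on the other hand, is slightly more elementary in that it only uses the surjectivity $P(\HC)=\Sym(\HC_B)$ directly rather than invoking the existence and properties of $P^+$.  Both arguments are short; yours packages the PSD bookkeeping more transparently.
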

\begin{proof}{Proof.}
First we show ``$\supseteq$'': suppose $Y = PX P^\dagger$ for some $X\in L(\HC)$.
Since $\Pi\ket{y} = \ket{y}$ for all $\ket{y}\in \Sym(\HC_B)$ and $P\ket{x}\in \Sym(\HC_B)$ for all $\ket{x}\in \HC$, it immediately follows that $\Pi P = P$ and thus $\Pi Y \Pi=  \Pi PX P^\dagger\Pi = PX P^\dagger= Y$.

Next we show ``$\subseteq$'': suppose $Y = \Pi Y \Pi$.  Then $\Pi Y = Y$ and so $Y = PZ$ for some $Z\in L(\HC_B,\HC)$ because each of the the columns of $Y=\Pi Y$ belongs to $\Sym(\HC_B)$.  Again since $Y = \Pi Y \Pi$, it follows that $Y=Y\Pi =PZ\Pi$.  Since each of the columns of $\Pi Z^\dagger$ belongs to $\Sym(\HC_B)$ we also have $\Pi Z^\dagger = PX^\dagger \Leftrightarrow Z \Pi = X P^\dagger$ for some $X\in L(\HC)$.  Therefore $Y = PZ\Pi = PXP^\dagger$.
\end{proof}

\subsection{Additional details on the construction of $\AC$ and $\AC^\dagger$ and Proof of Proposition~\ref{prop.norm}}

Let $[d]^k$ denote the $d^k$ sequences of length $k$ from $\{0,1,\dots,d-1\}$ and $[d]^{k\uparrow} \subseteq [d]^k$ denote the $d_k$ non-decreasing sequences of length $k$ from $\{0,1,\dots,d-1\}$.  We will write $\vec{i}$ to denote a generic sequence $i_1i_2\cdots i_{k}$ in $[d]^k$.

We will rely on the following canonical bases and coordinate systems for the spaces $\HC_b,\, \HC_B,$ and $\HC$ for a conveniently chosen $\HC$.
\begin{itemize}
\item For $\HC_b$: let $\{\ket{0},\dots,\ket{d-1}\}$ be an orthonormal basis for $\HC_b$.  Accordingly, the coordinates for $\HC_b$ will be $0,1,\dots,d-1$.
\item  For $\HC_B$: take the basis $\{\vec{\ket{i}} \;|\; \vec{i}\in [d]^k\}$.  
Accordingly, the set of coordinates for $\HC_B$ will be $\{\vec{i} \;|\; \vec{i}\in [d]^k\}$ ordered lexicographically.  For example, for $d=2$ and $k=3$ the coordinates are
\[
000, 001, 010, 011, 100, 101, 110, 111;
\]
and for $d=3$ and $k=2$ the coordinates are
\[
00, 01, 02, 10, 11, 12, 21, 20, 22.
\]
\item For $\HC$: let $\HC\subseteq \HC_B$ be the subspace spanned by the basis
$\{\vec{\ket{i}} \;|\; \vec{i}\in [d]^{k\uparrow}\}$.  Accordingly, the set of coordinates 
for $\HC$ will be $\{\vec{i} \;|\; \vec{i}\in [d]^{k\uparrow}\}$ ordered lexicographically.  For example, for $d=2$ and $k=3$ the coordinates for $\HC$ are
\[
000, 001, 011, 111;
\]
and for $d=3$ and $k=2$ the coordinates are
\[
00, 01, 02, 11, 12, 22.
\]

\end{itemize}


Let $P:\HC\rightarrow \HC_B$ denote the {\em scaled partition mapping} that maps the $\vec{i}$-th unitary vector from $\HC$ to the {\em normalized} indicator vector of the set of permutations of the sequence $\vec{i}$. Since the columns of $P$ are normalized, it follows that $P^\dagger P = I_{\HC}$.
For example, the mapping $P$ has the following matrix representation in the above  coordinate systems for $d=2,k=3$ and for $d=3,k=2$ respectively:
\[
P = \begin{pmatrix} 
1&0&0&0\\
0&1/\sqrt{3}&0&0\\
0&1/\sqrt{3}&0&0\\
0&0&1/\sqrt{3}&0\\
0&1/\sqrt{3}&0&0\\
0&0&1/\sqrt{3}&0\\
0&0&1/\sqrt{3}&0\\
0&0&0&1\\
\end{pmatrix},
\;\;
P = \begin{pmatrix} 
1&0&0&0&0&0\\
0&1/\sqrt{2}&0&0&0&0\\
0&0&1/\sqrt{2}&0&0&0\\
0&1/\sqrt{2}&0&0&0&0\\
0&0&0&1&0&0\\
0&0&0&0&1/\sqrt{2}&0\\
0&0&1/\sqrt{2}&0&0&0\\
0&0&0&0&1/\sqrt{2}&0\\
0&0&0&0&0&1\\
\end{pmatrix}.
\]
The crux of the operators $\AC$ and $\AC^\dagger$ is the following operator $M: L(\HC_b) \rightarrow L(\HC)$
\[
M(U):= P^\dagger \left(U \ot I_{b_{2:k}}\right) P.
\]
Indeed, a simple calculation shows that $\AC^\dagger(W) = (\mathbb{I}_a\ot M)(W)$ where $\mathbb{I}_a$ denotes the identity operator $\mathbb{I}_a: L(\HC_a)\rightarrow L(\HC_a)$.
It is evident that for $U\in L(\HC_b)$ the matrix representation of $M(U) \in L(\HC)$ can be written as follows in terms of the coordinates of $U$:
\[
M(U) = \sum_{i,j=0}^{d-1}  M_{ij} U_{ij}
\]
where  $ M_{ij} =  M(E_{ij})$ for each basis operator $E_{ij} = \ket{i}\bra{j} \in L(\HC_b)$.  

We next give an explicit expression for each $M_{ij}$ in the coordinates of $\HC$. To that end, we will rely on the following notational convention.   A sequence $\vec{\ell} \in [d]^{k\uparrow}$ can be coded as a tuple $(\ell_0,\cdots,\ell_{d-1})$ where $\ell_i$ denotes the number of appearances of $i$ in $\vec{\ell}$.  In other words, we can identify $\vec{\ell} \in [d]^{k\uparrow}$ with a tuple of  of nonnegative integers $(\ell_0,\cdots,\ell_{d-1})$ such that $\sum_{i=0}^{d-1} \ell_i = k$. 
We will also use the following convention: given $\vec{\ell}\in[d]^{(k-1)\uparrow}$ and $i\in[d]$ we write $\vec{\ell}+i\in[d]^{k\uparrow}$ to denote the sequence obtained by adding $i$ to the sequence $\vec{\ell}$.  Thus, in tuple notation $\vec{\ell}+i = (\ell_0,\dots,\ell_i+1,\dots,\ell_d)$. By relying on these pieces of notation and some simple calculations it follows that the nonzero entries of $M_{ij}$ are
\begin{equation}\label{eq.Mentries}
(M_{ij})_{(\vec{\ell}+i,\vec{\ell}+j)} = \frac{\sqrt{(\ell_i+1)(\ell_j+1)}}{k} \text{ for } \vec{\ell}\in [d]^{(k-1)\uparrow}.
\end{equation}
For example, for $d=2$ and $k=3$ we have 
\[
M(U) = M_{00} U_{00}+  M_{01} U_{01} +  M_{10} U_{10} +  M_{11} U_{11}
\]
where 
\[
 M_{00} = \begin{pmatrix} 
1&0&0&0\\
0&2/3&0&0\\
0&0&1/3&0\\
0&0&0&0
\end{pmatrix}, \;  M_{01} = \begin{pmatrix} 
0&1/\sqrt{3}&0&0\\
0&0&2/3&0\\
0&0&0&1/\sqrt{3}\\
0&0&0&0
\end{pmatrix},
\;  M_{11} = \begin{pmatrix} 
0&0&0&0\\
0&1/3&0&0\\
0&0&2/3&0\\
0&0&0&1
\end{pmatrix}
\]
and $M_{10} = M_{01}^\dagger.$

The matrices $M_{ij}$ are quite sparse and have the same sparsity pattern.  Indeed, from~\eqref{eq.Mentries} it readily follows that the number of nonzeros in each $d_k \times d_k$ matrix $M_{ij}$ is $d_{k-1} < d_k \ll d_k^2$.  Figure~\ref{fig.sparseM} illustrates the sparsity pattern of $M^\dagger$ for $d_a=d_b=3$ and $k=3$ when reshaped as matrix.  In this example the number of nonzeros in each row of $M^\dagger$  is $d_{k-1} = {3+1 \choose 2} = 6$.

\begin{figure}[!t]
        \includegraphics[width=5.75in]{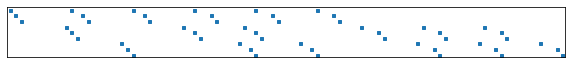}
\caption{Sparsity pattern of $M^\dagger$ reshaped as a $d^2 \times d_k^2$ matrix for $d=3$ and $k= 3$}\label{fig.sparseM}
\end{figure}

\begin{proof}{Proof of Proposition~\ref{prop.norm}.}
Since $\AC^\dagger =  \mathbb{I}_a \ot M$, it follows that 
$\AC \AC^\dagger = I_{d_a^2} \ot (M^\dagger M)$ if $\AC$ and $M$ are reshaped as matrices.  Therefore $ \|\AC^\dagger \AC\|=\|\AC \AC^\dagger\|= \|M^\dagger M\|$ and $\|(\AC \AC^\dagger)^{-1}\|= \|(M^\dagger M)^{-1}\|.$ 
We next describe the entries of $M^\dagger M$.  To that end, observe that the $(ij, i'j')$ entry of $M^\dagger M$ is 
\[
(M^\dagger M)_{ij, i'j'} = M_{ij} \bullet M_{i'j'}.
\]
The construction of $M_{ij}$ implies that the only nonzero entries of $M^\dagger M$ are 
the $(ij,ij)$ and $(ii,jj)$ entries for $i,j \in [d]$. Some simple  calculations show that for $i\in [d]$
\begin{equation}\label{eq.ak}
M_{ii}\bullet M_{ii} = \sum_{\vec{\ell} \in [d]^{(k-1)\uparrow}} \frac{(\ell_i+1)^2}{k^2} = \frac{1}{k^2} \sum_{\ell=1}^k \ell^2 \cdot (d-1)_{k-\ell},
\end{equation}
and for $i,j\in[d]$ with $i\ne j$ both
\begin{equation}\label{eq.bk}
M_{ij}\bullet M_{ij} = \sum_{\vec{\ell} \in [d]^{(k-1)\uparrow}} \frac{(\ell_i+1)(\ell_j+1)}{k^2} = \frac{1}{k^2} \sum_{\ell_0=1}^k\sum_{\ell_1 = 1}^{k+1-\ell_0} \ell_0\ell_1 \cdot (d-2)_{k+1-\ell_0-\ell_1},
\end{equation}
and 
\begin{equation}\label{eq.ck}
M_{ii}\bullet M_{jj} = \displaystyle\sum_{\vec{\ell} \in [d]^{(k-2)\uparrow}} \frac{(\ell_i+1)(\ell_j+1)}{k^2}=\frac{1}{k^2} \sum_{\ell_0=1}^{k-1}\sum_{\ell_1 = 1}^{k-\ell_0} \ell_0\ell_1 \cdot (d-2)_{k-\ell_0-\ell_1}. 
\end{equation}
To ease notation, let $a_k, b_k ,c_k$ denote the right-most expressions in~\eqref{eq.ak},~\eqref{eq.bk},and~\eqref{eq.ck} respectively.  
A combinatorial argument shows that  $a_k = b_k + c_{k}$. Therefore 
$M^\dagger M$ reshaped as a matrix can be written as
\[
b_k I_{d^2} + c_{k}\ket{v}\bra{v}
\]
where $\ket{v}$ is a $d^2$ vector with $d$ entries equal to one and all other entries equal to zero.  Hence the smallest and largest eigenvalues of both $M^\dagger M$ and $\AC \AC^\dagger$ are respectively $b_k$ and $b_k+dc_{k}$. 

A simple calculation shows that $I_d$ reshaped as a $d^2$ vector is an eigenvector of $M^\dagger M$ with  eigenvalue $b_k+dc_k$, that is $M(I_d) = (b_k+dc_{k})I_d$.  Since $M(I_{d}) = I_{\HC}$, it follows that
\[
(b_k + dc_{k})d = (b_k + dc_{k})\|I_d\|^2 = \|M(I_d)\|^2 = \|I_{\HC}\|^2 = d_k.
\]
Therefore 
\[
\|M^\dagger M\| = b_k + dc_k = \frac{d_k}{d}.
\]
Since $c_k \le b_k$, it follows that $(d+1)b_k \ge b_k + dc_k = d_k/d.$  Hence $b_k \ge d_k/d(d+1)$ and consequently
\[
\|(M^\dagger M)^{-1}\| = \frac{1}{b_k} \le \frac{d(d+1)}{d_k}.
\]
\end{proof}

\subsection{Proofs of Proposition~\ref{prop.fom} and Proposition~\ref{prop.fom.pst} }

The main update in Algorithm~\ref{algo.fom}, namely, $$(u_{t},X_{t+1}):=\FOMupdate(t),$$ can be performed via one any of the following three algorithmic schemes: Frank-Wolfe (FW),  Projected gradient (PG), and  fast projected gradient (FPG) as we next detail.

\begin{enumerate}
\item  {\bf Frank-Wolfe update~\cite{Braun2022,Jaggi2013}}:
\[
\begin{aligned}    
u_{t}&:=\AC(X_{t})-\rho, \quad S_t:=\text{argmin}_{S\in \D(\HC_a\ot \HC)} \AC^\dagger(u_t)\bullet S\\
X_{t+1} &:= X_t + \gamma_t (S_t-X_t) \text{ for some }\gamma_t \in [0,1].
\end{aligned}
\]
The second step above relies on a linear oracle for $\D(\HC_a\ot \HC)$.
We choose use the stepsize $\gamma_t$ via the following ``line-search'' procedure:
$$
\gamma_t = \text{argmin}_{\gamma \in [0,1]} \|\AC(X_t+\gamma(S_t-X_t))-\rho\|^2 = 
-\frac{(\AC(X_t) - \rho)\bullet(\AC(S_t-X_t))}{\|\AC(S_t-X_t)\|^2}.
$$

\item {\bf Projected gradient update~\cite{BenTal2021,GutmPena2023}:}
\[
\begin{aligned}  
X_{t+1} &:= \Pi_{\D(\HC_a\ot \HC)}(X_t - \tau_t \AC^{\dagger}(\AC(X_t)-\rho)) \text{ for some stepsize } \tau_t>0 \\
u_{t} &:=\frac{\sum_{i=0}^{t-1}\tau_i(\AC(X_i)-\rho)}{\sum_{i=0}^{t-1}\tau_i}
\end{aligned}
\]
The first step above relies on the orthogonal projection mapping $\Pi_{\D(\HC_a\ot \HC)}:\Herm(\HC_a\otimes \HC)\rightarrow \D(\HC_a\ot \HC)$ defined as
\[
\Pi_{\D(\HC_a\ot \HC)}(X)= \min_{Y\in \D(\HC_a\ot \HC)}\|Y-X\|.
\]
We choose the stepsize $\tau_t$ via backtracking so that the following sufficient condition for convergence holds: pick the largest $\tau_t > 0$ such that
\begin{equation}\label{eq.stepsize}
f(X_{t+1}) \le f(X_t) + \nabla f(X_t)\bullet (X_{t+1}-X_t) + \frac{1}{2\tau_t}\|X_{t+1} - X_t\|^2
\end{equation}
for the objective function $f(X):= \frac{1}{2}\|\AC(X)- \rho\|^2$.

\item {\bf Fast projected gradient update~\cite{BenTal2021,GutmPena2023}:}.  This update relies on some extra iterates $\tilde X_t, S_t$ and on the following convention: for a sequence of positive stepsizes $\tau_t, \; t=0,1,\dots$ we let $\theta_t, \; t=0,1,\dots$ be the sequence
\[
\theta_t := \frac{\tau_t}{\sum_{i=0}^t \tau_i}.
\]
The update $(u_{t},X_{t+1}) = \FOMupdate(t)$ via fast projected gradient is as follows
\[
\begin{aligned}
\tilde X_t&:=(1-\theta_t)X_t + \theta_t S_{t-1}\\
S_{t} &:= \Pi_{\D(\HC_a\ot \HC)}(S_{t-1} - \tau_t \AC^{\dagger}(\AC(\tilde X_t)-\rho)) \text{ for some } \tau_t > 0 \\
X_{t+1} &:= (1-\theta_t)X_t + \theta_t S_{t} \\
u_t&:=\frac{\sum_{i=0}^{t-1}\tau_i(\AC(Y_i)-\rho)}{\sum_{i=0}^{t-1}\tau_i}
\end{aligned}
\]    
When $t=0$ we set $S_{-1} := X_0$ in the first step above.

Again we choose the stepsize $\tau_t$ via backtracking so that the following sufficient condition for convergence holds: pick the largest $\tau_t > 0$ such that
\begin{equation}\label{eq.stepsize.fpg}
\frac{\tau_t}{\theta_t} \left(f(X_{t+1}) - f(\tilde X_t) - \nabla f(\tilde X_t)\bullet (X_{t+1}-\tilde X_t)\right) \le \frac{1}{2}\|S_{t} - S_{t-1}\|^2
\end{equation}
for the objective function $f(X):= \frac{1}{2}\|\AC(X)- \rho\|^2$.

\end{enumerate}

\begin{proof}{Proof of Proposition~\ref{prop.fom}.} We prove~\eqref{eq.fom.conv} for each of the three updates.

First, for the FW update, we rely on the following key property~\cite{Braun2022,Jaggi2013}:
 for $u_t:=\AC(X_t)-\rho$ and for our choice of stepsize the iterates of Algorithm~\ref{algo.fom} satisfy
\[
\gap(X_t,u_t) \le \frac{C}{t+2}
\]
where $C$ is the {\em curvature} of the objective function $X\mapsto f(X):= \frac{1}{2}\|\AC(X)- \rho\|^2$ on $\Delta$, namely:
\begin{equation}\label{curv}
C = \max_{X,S\in \D(\HC_a\ot\HC), \gamma \in (0,1)}\frac{f(X+\gamma(S-X)) - \gamma\nabla f(X)\bullet (S-X) - f(X)}{\gamma^2/2}.
\end{equation}
To get~\eqref{eq.fom.conv}, it suffices to show that $C\le 4$.  Indeed, 
from~\eqref{curv}, the inequality between the Frobenius and nuclear norms, and Proposition~\ref{prop.prop.A} it readily follows that
\begin{multline*}
C= \max_{X,S\in \D(\HC_a\ot\HC)}\|\AC(X-S)\|^2 
\le \max_{X,S\in \D(\HC_a\ot\HC)}(\|\AC(X)\|_* + \|\AC(S)\|_*)^2
\\
= \max_{X,S\in \D(\HC_a\ot\HC)}(\|X\|_*+\|S\|_*)^2
 = 4. 
\end{multline*}
Second, for the PG update, 
will rely on the following property of the projected gradient algorithm established in~\cite{GutmPena2023}:
for $u_t$ as chosen in Algorithm~\ref{algo.fom} and stepsizes that satisfy~\eqref{eq.stepsize} it holds that
\begin{equation}\label{eq.gap.bound}
\gap(X_t,u_t)  \le \frac{\max_{X\in \D(\HC_a\ot\HC)} \|X-X_0\|^2/2}{\sum_{i=0}^{t-1}\tau_i}.
\end{equation}
To get~\eqref{eq.fom.conv}, we next bound each of the terms $\max_{X\in \Delta}\|X-X_0\|$ and the $\tau_i$.  The inequality between the Frobenius and nuclear norms implies that $\|X-Y\| \le \|X-Y\|_* \le \|X\|_* + \|Y\|_* = 2$ for all $X,Y\in \D(\HC_a\ot\HC)$.  Therefore,
$
\max_{X\in \D(\HC_a\ot\HC)} \|X-X_0\|^2/2 \le 2.
$
On the other hand, some algebraic calculations and Proposition~\ref{prop.norm} show that for all $X,Y \in \D(\HC_a\ot\HC)$
\[
f(Y)-f(X)-\nabla f(X)\bullet (Y-X) = \frac{1}{2}\|\AC(X-Y)\|^2 \le \frac{1}{2} \|\AC^\dagger \AC\| \|X-Y\|^2 \le \frac{d_k}{2d_b}  \|X-Y\|^2 .
\]
Thus each $\tau_i \ge d_b/d_{k}$.
Plugging this bound in~\eqref{eq.gap.bound} we get
\[
\gap(X_t,u_t)\le\frac{2}{dt/d_k} = \frac{2d_k}{d_bt}.
\]
Third, for FPG, we rely on the following property of the fast projected gradient algorithm established in~\cite{GutmPena2023}:
for $u_t$ as chosen in Algorithm~\ref{algo.fom} and stepsizes that satisfy~\eqref{eq.stepsize.fpg} it holds that
\begin{equation}\label{eq.gap.bound.fpg}
\gap(X_t,u_t)  \le \frac{\max_{X\in  \D(\HC_a\ot\HC)} \|X-X_0\|^2/2}{\sum_{i=0}^{t-1}\tau_i}.
\end{equation}
From the above PG case, we get $
\max_{X\in  \D(\HC_a\ot\HC)} \|X-X_0\|^2/2 \le 2$.  To get~\eqref{eq.fom.conv}, we next bound $1/(\sum_{i=0}^{t-1}\tau_i)$.  Some algebraic calculations 
and Proposition~\ref{prop.norm} 
show that for $X_{t+1},\tilde X_t \in  \D(\HC_a\ot\HC)$
\begin{multline*}
f(X_{t+1})-f(\tilde X_t)-\nabla f(\tilde X_t)\bullet (X_{t+1}-\tilde X_t) = \frac{1}{2} \|\AC(X_{t+1}-\tilde X_t)\|^2 = \frac{\theta_t^2}{2} \|\AC(S_t-S_{t-1})\|^2\\
\le \frac{\theta_t^2}{2}\|\AC^\dagger \AC\| \|S_t-S_{t-1}\|^2
\le \frac{d_k\theta_t^2}{2d_b} \|S_t-S_{t-1}\|^2.
\end{multline*}
Thus~\eqref{eq.stepsize.fpg} holds provided that $\tau_t \theta_t$ does not exceed  $d/d_k$.  Hence~\cite[Lemma 1]{GutmPena2023} implies that the stepsizes $\tau_t$ satisfy
\[
\frac{1}{\sum_{i=0}^{t-1}\tau_i}\le \frac{4d_k}{d_b(t+1)^2}.
\]
Plugging this bound and the bound $
\max_{X\in \Delta} \|X-X_0\|^2/2 \le 2$ in~\eqref{eq.gap.bound.fpg} we get
\[
\gap(X_t,u_t)\le \frac{8d_k}{d_b(t+1)^2}.
\]
\end{proof}

\begin{proof}{Proof of Proposition~\ref{prop.fom.pst}.} The proof is nearly identical to that of   Proposition~\ref{prop.fom}.  The only difference is that some constants are a bit larger because of the additional terms in the objective function and constraints.
\end{proof}

\subsection{Description of the examples in Section~\ref{sec.examples}}    
\noindent
{\bf Isotropic states.}
Let $d_a=d_b=d$ and consider the family of isotropic states
\begin{equation}\label{eq:isotropic}
    \rho = \lambda \dya{\psi} + \frac{1-\lambda}{d^2-1}(I_{ab} -
    \dya{\psi}),
\end{equation}
where $\lambda \in [0,1]$ and
$
\ket{\psi} = \frac{1}{\sqrt{d}}\sum_{i=1}^d \ket{i}_a\otimes\ket{i}_b.
$
%
The state $\rho$ is entangled if $\lambda \in (1/d,1]$ and separable if
$\lambda \in [0, 1/d]$ as shown in~\cite{Horodecki98}. 

\bigskip 
 
\noindent 
{\bf Werner states.}
Let $d_a=d_b=d$ and consider the family of Werner states~\cite{Werner89},
%
\begin{equation}
    \rho = \frac{\lambda}{d(d+1)} (I_{ab} + W) + \frac{1-\lambda}{d(d-1)}(I_{ab} - W)
    \label{eq:wrnr}
\end{equation}
where $\lambda \in [0,1]$ and
\begin{equation}
    W = \sum_{ij} \ket{i}\bra{j}\otimes\ket{j}\bra{i}.
    \label{eq:swpd}
\end{equation}
%
This state is entangled for
$\lambda \in [0,1/2)$ and separable for $\lambda\in [1/2,1]$. 

\bigskip 

\noindent
{\bf Qutrit states.}
Let $d_a=d_b=3$ and consider the two qutrit state described and studied
in~\cite{Horodecki01},
\begin{equation}
    \rho = \frac{2}{7}\ket{\psi_{+}}\bra{\psi_{+}} + \frac{\alpha}{7}\sigma_{+} +
\frac{5 -  \alpha}{7}S \sigma_{+} S
    \label{eq:qtritBnd}
\end{equation}
where $\alpha \in [0,5/2]$ and $\ket{\psi_{+}} = \frac{1}{\sqrt{3}}(\ket{00} +
\ket{11} + \ket{22})$, $\sigma_{+} = \frac{1}{\sqrt{3}}(\dya{01} + \dya{12} +
\dya{20})$, $S$ is the SWAP operator, $W$ in~\eqref{eq:swpd}, with $d=3$.  This
state is known to be entangled for $\alpha \in[0,2)$ and separable for $\alpha
\in [2,5/2]$.  Entanglement can be detected via the PPT criterion only for
$\alpha \in [0,1)$. For other values $1 \leq \al <2$ the state is both PPT and
entangled. 

\bigskip
 
\noindent 
{\bf Collection of $3 \times 3$ states with PPT entanglement.}
Let $d_a=3,d_b=3$ and consider the family of states described
in~\cite{Horodecki97}:

\begin{equation}
         \rho = \frac{1}{8y+1}
    \begin{pmatrix}
        A & B & C\\
        B^{\dagger} & A & D \\
        C^{\dagger} & D^{\dagger} & E
    \end{pmatrix},
    \label{eq:3x3bnd}
\end{equation}
where $y \in [0,1]$, $A = \text{diag}(y,y,y)$ and
$$
B  = \begin{pmatrix}
        0 & y & 0\\
        0 & 0 & 0 \\
        0 & 0 & 0
    \end{pmatrix},
        C = \begin{pmatrix}
        0 & 0 & y\\
        0 & 0 & 0 \\
        0 & 0 & 0
    \end{pmatrix},
        D = \begin{pmatrix}
        0 & 0 & 0\\
        0 & 0 & y \\
        0 & 0 & 0
    \end{pmatrix},
            \quad \text{and} \quad
        E = \frac{1}{2}
        \begin{pmatrix}
        1+y & 0 & \sqrt{1-y^2}\\
        0 & 2y & 0 \\
        \sqrt{1-y^2} & 0 & 1+y
    \end{pmatrix}.
$$
The state $\rho$ is entangled for $y \in(0,1]$ and separable for $y=0$. This
state is PPT for all $0 \leq y \leq 1$ and thus entanglement cannot be detected
via the PPT criterion.  

\bigskip 
 
\noindent 
{\bf Collection of $2 \times 4$ states with PPT entanglement.}
Let $d_a=2,d_b=4$ and consider the family of states described in~\cite{Horodecki97}:
\begin{equation}\label{eq:2x4}
   \rho = \frac{1}{7x+1}
    \begin{pmatrix}
        x & 0 & 0 & 0 & 0 & x & 0 & 0 \\
        0 & x & 0 & 0 & 0 & 0 & x & 0 \\ 
        0 & 0 & x & 0 & 0 & 0 & 0 & x \\
        0 & 0 & 0 & x & 0 & 0 & 0 & 0 \\
        0 & 0 & 0 & 0 & (1+x)/2 & 0 & 0 & \sqrt{1-x^2}/2\\
        x & 0 & 0 & 0 & 0 & x & 0 & 0\\
        0 & x & 0 & 0 & 0 & 0 & x & 0 \\
        0 & 0 & x & 0 & \sqrt{1-x^2}/2 & 0 & 0 &(1+x)/2
    \end{pmatrix},
    \end{equation}
where $x \in [0,1]$.  The state $\rho$ is entangled for $x\in (0,1)$ and
separable for $x=0$ and for $x=1$.  This state is PPT for all $x\in[0,1]$ and
thus entanglement cannot be detected via the PPT criterion. 

\bigskip

\subsection{Additional numerical results}

The next figures summarize numerical results analogous to those in  
Figures~\ref{figApstiso}--\ref{figApst3by3}, Figures~\ref{figextpstiso}--\ref{figextpst3by3}, and Figures~\ref{figcompareiso}--\ref{figcompare3by3} but for the Werner states and for the $2\times 4$ collection of states.

\begin{figure}[!ht]
 \hspace{-.6in}
\begin{tabular}{c} 
\begin{subfigure}{.6\textwidth}
    \centering
        \includegraphics[width=.8\textwidth]{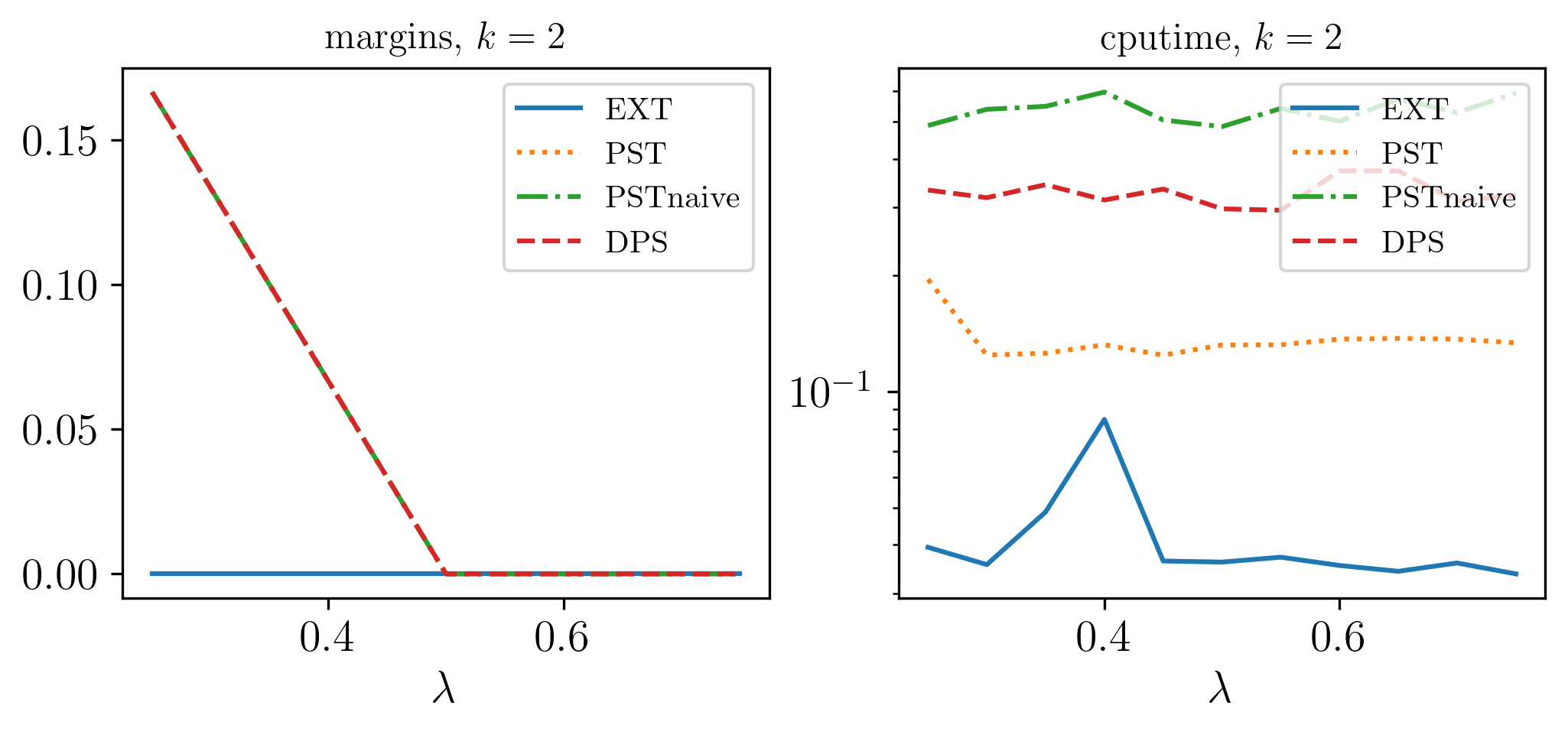}
        \caption{Werner states}
    \end{subfigure} 
    \hspace{-.4in}
    \begin{subfigure}{.6\textwidth}
    \centering
        \includegraphics[width=.8\textwidth]{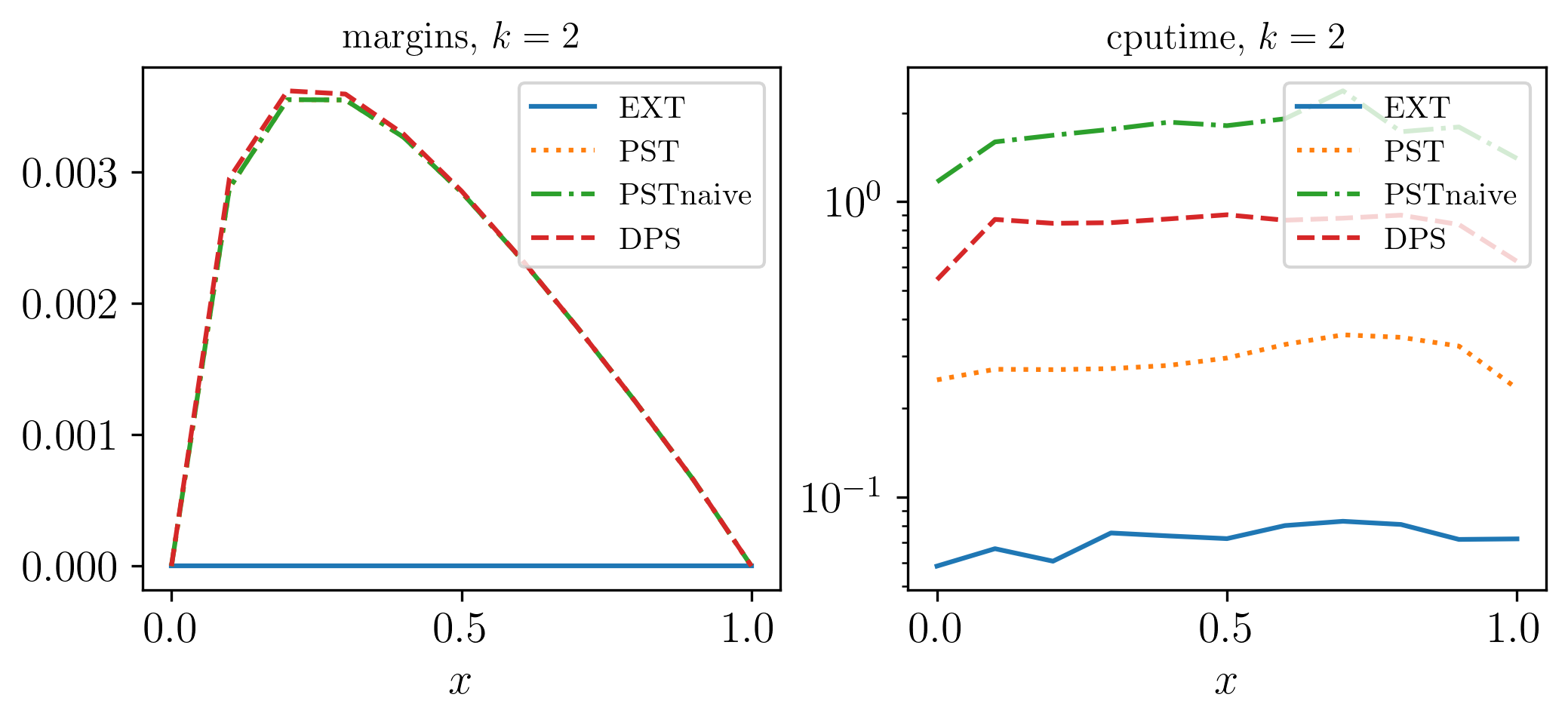}
        \caption{$2\times 4$ states}
    \end{subfigure} 
\end{tabular}
\caption{Comparison of EXT, PST, PST (naive), and DPS on the Werner and $2 \times 4$ collections.}\label{figApstwerner}
\end{figure}

\begin{figure}[!ht]
 \hspace{-.6in}
\begin{tabular}{c} 
\begin{subfigure}{.6\textwidth}
    \centering
        \includegraphics[width=.8\textwidth]{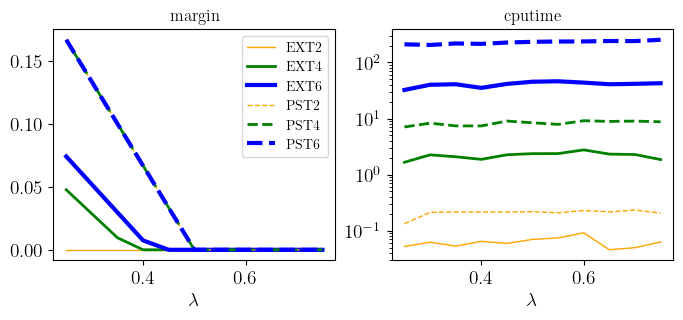}
        \caption{Werner states}
    \end{subfigure} 
    \hspace{-.4in}
    \begin{subfigure}{.6\textwidth}
    \centering
        \includegraphics[width=.8\textwidth]{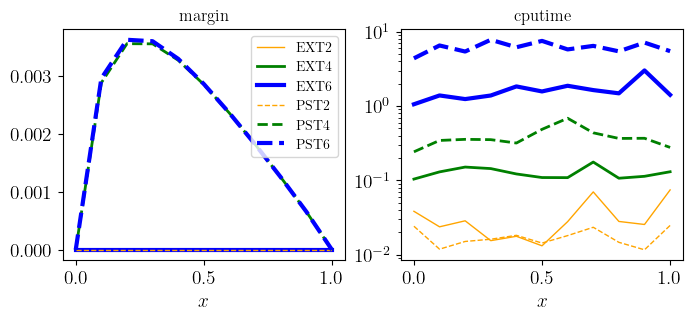}
        \caption{$2\times 4$ states}
    \end{subfigure} 
\end{tabular}
\caption{Comparison of EXT and PST on the Werner and $2\times 4$ collections.}\label{figextpstwerner}
\end{figure}

\begin{figure}[!ht]
 \hspace{-.6in}
\begin{tabular}{c} 
\begin{subfigure}{.6\textwidth}
    \centering
        \includegraphics[width=.8\textwidth]{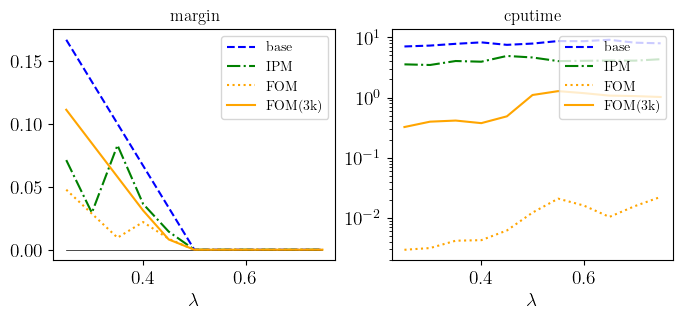}
        \caption{k = 4}
    \end{subfigure} 
    \hspace{-.4in}
    \begin{subfigure}{.6\textwidth}
    \centering
        \includegraphics[width=.8\textwidth]{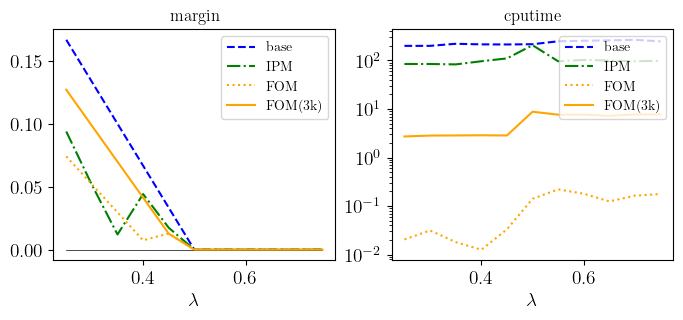}
        \caption{k = 6}
    \end{subfigure} 
\end{tabular}
\caption{Comparison of FOM, IPM, and baseline for PST on the Werner collection.}\label{figcomparewerner}
\end{figure}

\begin{figure}[!ht]
 \hspace{-.6in}
\begin{tabular}{c} 
\begin{subfigure}{.6\textwidth}
    \centering
        \includegraphics[width=.8\textwidth]{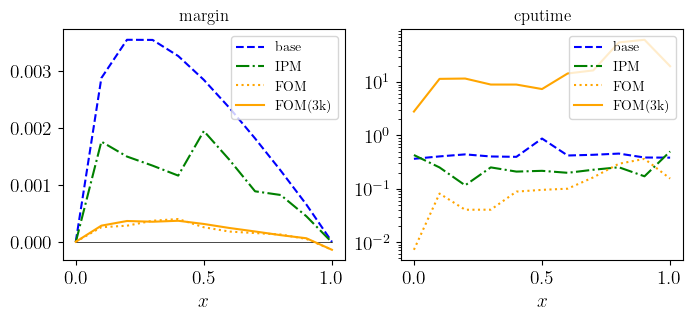}
        \caption{k = 2}
    \end{subfigure} 
    \hspace{-.4in}
    \begin{subfigure}{.6\textwidth}
    \centering
        \includegraphics[width=.8\textwidth]{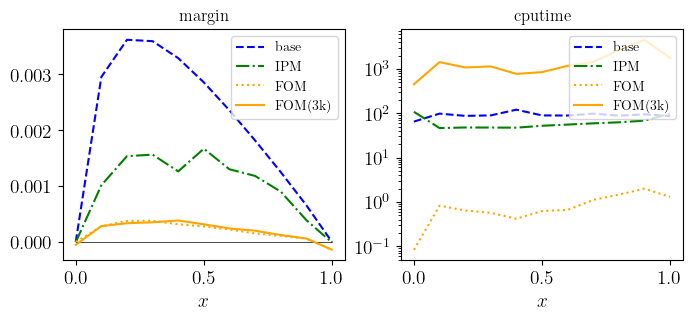}
        \caption{k = 4}
    \end{subfigure} 
\end{tabular}
\caption{Comparison of FOM, IPM, and baseline for PST on the $2\times 4$ collection.}\label{figcompare2by4}
\end{figure}

\subsection{Pathological instances and a natural preconditioning antidote}

The above examples of entangled states are all detectable via $\PST_2$ with the IPM and base configurations.  As it is discussed in \cite{DohertyParriloEA04},  for all $k=2,3,\dots$ there is a canonical construction that transforms an entangled and PPT state into another entangled state that belongs to $\DPS_k$ and thus also to $\PST_k$.  The transformation is as follows: for $\gamma > 0$ let $D_\gamma := \text{diag}(1,\gamma,\dots,\gamma)$ and
$$
\rho_\gamma = c(I_a\otimes D_\gamma) \rho (I_a\otimes D_\gamma)
$$
where $c>0$ is a normalization constant such that $\text{trace}(\rho_{\gamma})=1$.
Our numerical experiments confirm that indeed this transformation may convert entangled states into states that are not detectable via $\PST_k$.  \blue{More precisely, if $\rho$ is the qutrit state with $\alpha = 1.9$ and $\gamma = 0.3$ then the entanglement of $\rho_\gamma$ is not detectable via $\PST_2$ or even the tighter $\DPS_2$.  The entanglement of $\rho_\gamma$ is detectable via both $\PST_3$ or $\DPS_3$.  Detection via $\PST_3$ is substantially faster than via $\DPS_3$ (one or two seconds versus several minutes).}


The above transformation suggests the following simple {\em preconditioning} procedure. Suppose $\rho\in \HC_{ab}$ is a density operator and we want to detect if $\rho$ is entangled.  First, compute $\rho_b = \Tr_a(\rho)$.  We can assume $\rho_b$ is full rank and thus non-singular as otherwise the problem can be reduced to a lower dimensional subspace of $\HC_b$.  
Under that assumption, consider the {\em preconditioned state}
$$
\bar \rho :=\frac{1}{d_b}(I_a\otimes \rho_b^{-1/2}) \rho (I_a\otimes \rho_b^{-1/2})
$$
It is easy to see that $\rho$ is entangled if and only if the transformed density operator $\bar \rho$ is entangled.
This preconditioning procedure rules out examples of the form $\rho_\gamma$ and naturally raises the following research task: construct explicit instances of entangled states $\rho\in \HC_{ab}$ that satisfy both $\Tr_a(\rho)=I_b/d_b$ and $\rho \in \PST_k$.  

\blue{
For the five collections we have tested, this preconditioning procedure only changes the states in the $3\times 3$ and $2\times 4$ collections and the change is small.  Figure~\ref{fig.precond.3by3} 
illustrates the effect of preconditioning on margin and CPU time for $\PST_3$ on these two  collections of states.  The thicker lines display results after preconditioning.  Although preconditioning involves only a slight modification for states in this collection, the results show that preconditioning indeed yield some improvement in margin and running time.
}

\begin{figure}[!ht]
 \hspace{-.6in}
\begin{tabular}{cc}
\begin{subfigure}{.6\textwidth}
    \centering
        \includegraphics[width=.8\textwidth]{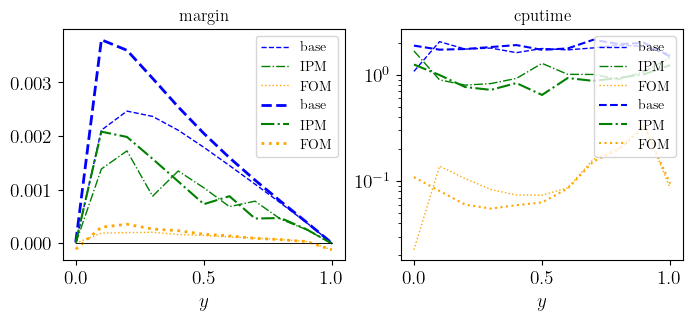}
        \caption{$3 \times 3$ collection}
    \end{subfigure} 
    \hspace{-.4in}
    \begin{subfigure}{.6\textwidth}
    \centering
        \includegraphics[width=.8\textwidth]{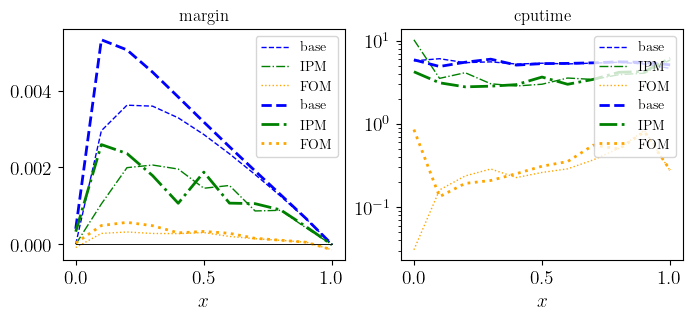}
        \caption{$2 \times 4$ collection}
    \end{subfigure}     
\end{tabular}
\caption{Effect of preconditioning on FOM, IPM, and baseline for $\PST_3$.}\label{fig.precond.3by3}
\end{figure}

\end{document}